\tikzstyle{vecArrow} = [thick, decoration={markings,mark=at position
\newcommand{\BlackBoxes}{\global\overfullrule5pt}
\newcommand{\R}{\mathbb{R}}
\newcommand{\Eop}{\mathbb{E}}
\newcommand{\Prob}{\mathbb{P}}
\newtheorem{theorem}{Theorem}
\newtheorem{corollary}[theorem]{Corollary}
\newtheorem{proposition}[theorem]{Proposition}
\theoremstyle{definition}
\newtheorem{example}[theorem]{Example}
\newtheorem{remark}[theorem]{Remark}
\newtheorem{definition}[theorem]{Definition}
\numberwithin{equation}{section} \numberwithin{theorem}{section}
\def\0{\kern0pt\-\nobreak\hskip0pt\relax}
 \def\@serieslogo{%
 \vbox to\headheight{%
 \parindent\z@ \fontsize{6}{7\p@}\selectfont
 \vss}}}
\def\makeoverbar#1#2#3#4#5#6#7{%
 \setbox0=\hbox{$\m@th#2\mkern#5mu{{}#3{}}\mkern#6mu$}%
 \setbox1=\null \dimen@=#4\fontdimen8#13 \dimen@=3.5\dimen@
 \advance\dimen@ by \ht0 \dimen@=-#7\dimen@ \advance\dimen@ by \wd0
 \ht1=\ht0 \dp1=\dp0 \wd1=\dimen@
 \dimen@=\fontdimen8#13 \fontdimen8#13=#4\fontdimen8#13
 \rlap{\hbox to \wd0{$\m@th\hss#2{\overline{\box1}}\mkern#5mu$}}
 \fontdimen8#13=\dimen@}
\def\mylabel#1#2{{\def\@currentlabel{#2}\label{#1}}}
\begin{document}


\makeatletter \providecommand\@dotsep{5} \makeatother

\title[Optimal Risk Allocation in Reinsurance Networks]{Optimal Risk Allocation in Reinsurance Networks}

\author[N. \smash{B\"auerle}]{Nicole B\"auerle${}^*$}
\address[N. B\"auerle]{Department of Mathematics,
Karlsruhe Institute of Technology, D-76128 Karlsruhe, Germany}

\email{\href{mailto:nicole.baeuerle@kit.edu}
{nicole.baeuerle@kit.edu}}

\author[A. \smash{Glauner}]{Alexander Glauner${}^\dag$}
\address[A. Glauner]{Department of Mathematics,
Karlsruhe Institute of Technology, D-76128 Karlsruhe, Germany}

\email{\href{mailto:alexander.glauner@kit.edu} {alexander.glauner@kit.edu}}


\begin{abstract}
In this paper we consider reinsurance or risk sharing from a macroeconomic point of view.  Our aim is to find socially optimal reinsurance treaties.  In our setting we assume that there are $n$ insurance companies each bearing a certain risk and one representative reinsurer. The optimization problem  is to minimize the sum of all capital requirements of the insurers where we assume that all insurance companies use a form of Range-Value-at-Risk.  We show that in case all insurers use Value-at-Risk and the reinsurer's premium principle satisfies monotonicity, then  layer reinsurance treaties are  socially optimal. For this result we do not need any dependence structure between the risks. In the general setting with Range-Value-at-Risk we obtain again the optimality of layer reinsurance treaties under further assumptions, in particular under the assumption  that the individual risks are positively dependent through the stochastic ordering. Our results include the findings in \cite{CT13} in the special case $n=1$. At the end,  we discuss the difference between socially optimal reinsurance treaties and individually optimal ones by looking at a number of special cases.
\end{abstract}
\maketitle

\vspace{0.5cm}
\begin{minipage}{14cm}
{\small
\begin{description}
\item[\rm \textsc{ Key words}  ]
{\small Optimal reinsurance, Range-Value-at-Risk, Positively dependent through stochastic ordering }
\end{description}
}
\end{minipage}

\section{Introduction and Motivation}\label{sec:intro}
Finding the optimal form of a reinsurance treaty or in more general terms,  optimizing risk sharing, is an old topic which regained a lot of attention in recent years. One of the first starting points has been \cite{borch60} who proved that a stop-loss reinsurance treaty minimizes the retained loss of the insurer given the reinsurance premium is calculated with the expected value principle. A similar result has been derived in \cite{Arrow63} where the expected utility of terminal wealth of the insurer has been maximized. Since then a lot of generalizations of this problem have been considered. We refer the interested reader to the recent book \cite{ABT17} which contains a comprehensive literature overview in chapter 8 and to \cite{DP14}. We will here only mention a few recent articles which are relevant for our study. First in \cite{BBH09}  a characterization of optimal reinsurance forms for a general class of risk measures has been given by exploiting duality theory in functional analysis. A stop-loss  treaty turned out to be optimal when the premium principle is an expected value principle. Further \cite{CT13} considered the optimization problem with Value-at-Risk and Expected Shortfall and a general premium principle for the insurer. They obtain the optimality of a layer-reinsurance.

While most publications consider the problem only from the perspective of the individual insurer, we investigate the situation from an economic point of view. More precisely, we want to know what kind of risk sharing between  insurers and reinsurer is  optimal for the entire economy and in which situations is it identical to the individually optimal decision of the insurer? This question also makes it necessary to address the task of modelling the problem for a random vector representing the individual risks taken by the insurers. There exists of course a rich literature on risk sharing problems where random vectors are involved. The most popular problem is the so-called {\em inf-convolution} problem which is given by
$$ \min \; \sum_{i=1}^n \rho_i(X_i) \quad s.t. \;\; X_1+\ldots + X_n=X$$
where $\rho_i$ are suitable risk measures.  It has been shown in \cite{FS08} that for law- and cash-invariant convex risk measures a solution always exists and is given by a comonotone structure.  This result has been refined by \cite{ELW16} where it has been shown that if the risk measures are given by Range-Value-at-Risk, there is an explicit construction for the optimal solution. In \cite{KR13} this problem has been interpreted in a setting with several insurers with general convex risk measures and premium principles. There, optimal reinsurance contracts have been characterized by means of subdifferential formulas in Banach spaces. For more results on the inf-convolution problem we refer the reader to \cite{Ru13}.

Problems where special kinds of  risk sharing between two entities are considered can be found e.g. in \cite{ABT13}. There, the insurance group allocates the total risk between two entities which are subject to different regulatory capital requirements, using appropriate risk transfer agreements. The optimal risk sharing rule is derived explicitly for special risk measures like Value-at-Risk and Expected Shortfall. In \cite{CLL16} the authors develop optimal reinsurance contracts that minimize the convex combination of the Value-at-Risk of the insurer's loss and the reinsurer's loss under some constraints. Some explicit, though rather complicated optimal reinsurance treaties are obtained there. Next, \cite{CM14} investigate the optimal form of reinsurance from the perspective of an insurer when he decides to cede parts of the loss to two reinsurers, where both reinsurers calculate the premium according to different premium principles. The problem is solved under the criterion of minimizing Value-at-Risk  or Expected Shortfall. An  optimal reinsurance treaty is to cede two adjacent layers of the risk.
Another multivariate problem is considered in  \cite{ZCW14} where optimal reinsurance strategies for an insurer with multiple lines of business  are investigated under the criterion of minimizing the total capital requirement calculated based on the multivariate lower-orthant Value-at-Risk. The optimal strategy for the insurer there is to buy a two-layer reinsurance treaty for each line of business.  Note that the dependence structure for the individual risks was not important for the results cited so far. A worst case scenario w.r.t.\ the dependence structure has been considered in \cite{CSY14} where  the problem of optimal reinsurance treaties for
multivariate risks with general law-invariant convex risk measures has been studied. It turned out that stop-loss reinsurance treaties minimize a general law-invariant convex risk measure of the total retained
risk. In \cite{CW12} it has been assumed that  an insurer has $n$ lines of business which can be reinsured subject to a given premium and the aim is to minimize the expected convex function of the retained total risk. In order to derive results in this setting the authors needed a concept for positive dependence between risks which has been the concept of 'positively dependent through the stochastic ordering'.

Papers with a more economic point of view on optimal reinsurance are among others \cite{UL85} where a  Stackelberg equilibrium for $n$  reinsurers under special assumptions is considered and \cite{PS01} where a game-theoretic analysis of optimal insurance networks has been conducted.

The aim of this paper now is to consider reinsurance or risk sharing from a macroeconomic point of view. Whereas the individual goal of an insurance company is to reduce risk exposure and own capital requirements by reinsurance, the social goal of reinsurance is to spread risk around the globe by avoiding local overexposures. This construction also increases the amount of risk which can be insured.  In our setting we assume that there are $n$ insurance companies, each bearing a certain risk, and one representative reinsurer. In contrast to the inf-convolution problem the situation is no longer symmetric. The optimization problem then is to minimize the sum of all capital requirements of the insurers. We assume that all insurance companies use Range-Value-at-Risk as a risk measure with possibly different parameters.  Range-Value-at-Risk comprises Value-at-Risk and Expected Shortfall and is thus a natural choice with practical relevance. We show that in case all insurers use Value-at-Risk and the reinsurer's premium principle satisfies monotonicity, then  layer reinsurance treaties are  socially optimal. For this result we do not need any dependence structure between the risks. In the general setting with Range-Value-at-Risk we obtain again the optimality of layer reinsurance treaties under the assumption that   the reinsurer's premium principle is consistent with the increasing convex order (which most premium principles are) and under the assumption  that the individual risks are positively dependent through the stochastic ordering (PDS). Our results include the findings in \cite{CT13} in the special case $n=1$. Finally, we also discuss the difference between socially optimal reinsurance treaties and individually optimal ones. Fortunately, they coincide in many cases but there also may be some differences.

Our paper is organized as follows: In the next section we summarize some definitions and facts from risk measures, stochastic orders and dependence concepts. In particular we prove that  PDS random vectors carry the increasing convex order of the margins over to the sum of the components. In Section \ref{sec:optpro} we introduce and discuss our optimization problem. The solution of the problem is then presented in Section  \ref{sec:solution} where also some special cases are discussed. In the last section we investigate the difference between socially optimal reinsurance treaties and individually optimal ones by looking at a number of special cases.

\section{Risk Measures, Stochastic Orders and Dependence Structures}
We will consider non-negative random variables $X: \Omega \to \R_+$ defined on a non-atomic probability space $(\Omega, \mathcal{A},\Prob)$. They represent future insurance claims, i.e. $X(\omega) \geq 0$ is the discounted net loss of an insurance company at the end of a fixed period due to a policy (or portfolio of policies) sold by them.  We denote the (cumulative) distribution function by $F_X(x):=\Prob(X \leq x)$, the survival function by $S_X(x):=1-F_X(x)$ and the generalized inverse by $ F^{-1}_X(\alpha):= \inf\{x \in \R: \ F_X(x) \geq \alpha\}$ where $x \in \R$ and $\alpha \in [0,1]$. With
\[L^1:=\{X: \Omega \to \R_+: \ X \text{ is a random variable with } \Eop[X] < \infty \} \]
we denote the space of all such non-negative, integrable random variables. We now recall some notions of risk measures. In general, a risk measure is a mapping $\rho : L^1 \to \bar{\R}$. Essentially, the notion of a premium principle $\pi: L^1 \to  \bar{\R}$ is mathematically equivalent but applications are different. While the former determines the necessary solvency capital to bear a risk, the latter gives the price of (re)insuring it. The properties of risk measures discussed in the sequel apply to premium principles analogously. Of particular importance are the following risk measures.

\begin{definition}
For $\alpha,\beta \in [0,1]$ and $X \in L^1$ with distribution function $F_X$ we define 
\begin{itemize}
\item[a)] the \emph{Value-at-Risk} of $X$ at level $\alpha$ as $VaR_{\alpha}(X) := F^{-1}_X(1-\alpha)$.
\item[b)] the \emph{Expected Shortfall} of $X$ at level $\beta>0$ as
	$ES_{\beta}(X):= \frac{1}{\beta}\int_0^{\beta} VaR_{s}(X) ds.$
\item[c)]	 	  the \emph{Range-Value-at-Risk} of $X$ at level $\alpha, \beta$ if  $\alpha + \beta \leq 1$ as
	$$RVaR_{\alpha, \beta}(X):=  \begin{cases}
			\frac{1}{\beta} \int_{\alpha}^{\alpha + \beta} VaR_{s}(X) d s, & \beta >0\\
			VaR_{\alpha}(X), & \beta=0.
	\end{cases}$$	
\end{itemize}
\end{definition}
Obviously, Range-Value-at-Risk comprises both Value-at-Risk and Expected Shortfall.\\
A risk measure $\rho$ should have some nice properties like  for example
\begin{itemize}
\item[i)] {\em law-invariance:} $\rho(X)$ depends only on the distribution $F_X$.
\item[ii)] {\em monotonicity:} If $X\le Y$ then $\rho(X)\le \rho(Y)$.
\item[iii)] {\em translation invariance:} For $m\in\R$ it holds $\rho(X+m)=\rho(X)+m$.
\item[iv)] {\em positive homogeneity:} For $\alpha\ge 0$ it holds that $\rho(\alpha X)=\alpha\rho(X).$
\item[v)] {\em subadditivity:} $\rho(X+Y)\le \rho(X)+\rho(Y)$.
\item[vi)] {\em convexity:} For $\alpha\in [0,1]$ it holds that $\rho(\alpha X+(1-\alpha)Y)\le \alpha\rho(X)+(1-\alpha)\rho(Y).$
\end{itemize}

Though Value-at-Risk is in general not subadditive it has a lot of nice properties like law-invariance, monotonicity, translation invariance and positive homogeneity. These properties are also shared by Range-Value-at-Risk. The following facts, which can be directly derived from the definition, will be important for us:

\begin{proposition}\label{prop:varpro}
For $\alpha \in [0,1]$ and $X \in L^1$ we obtain:
\begin{itemize}
\item[a)] $\alpha \to VaR_{\alpha}(X)$ is decreasing in $\alpha$.
\item[b)] For any non-decreasing, left-continuous function $t: \R \to \R$ it holds $	VaR_{\alpha}(t(X)) = t\big(VaR_{\alpha}(X)\big).$
\end{itemize}
\end{proposition}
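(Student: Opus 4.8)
The plan is to reduce both statements to elementary properties of the generalized inverse $F_X^{-1}(p)=\inf\{x\in\R: F_X(x)\ge p\}$, since by definition $VaR_\alpha(X)=F_X^{-1}(1-\alpha)$. Throughout I would use two facts about $F_X^{-1}$ that follow at once from its definition together with the right-continuity of the distribution function $F_X$: first, $p\mapsto F_X^{-1}(p)$ is non-decreasing, because the sublevel sets $\{x: F_X(x)\ge p\}$ shrink as $p$ grows, so their infima increase; and second, the Galois-type equivalence
$$ F_X^{-1}(p)\le x \iff p\le F_X(x), $$
whose ``$\Leftarrow$'' direction is immediate and whose ``$\Rightarrow$'' direction uses right-continuity to guarantee $F_X\big(F_X^{-1}(p)\big)\ge p$.

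Part a) is then immediate: the map $\alpha\mapsto 1-\alpha$ is decreasing, and composing it with the non-decreasing map $F_X^{-1}$ yields a decreasing map $\alpha\mapsto F_X^{-1}(1-\alpha)=VaR_\alpha(X)$. Concretely, for $\alpha_1\le\alpha_2$ one has $1-\alpha_1\ge 1-\alpha_2$, and monotonicity of $F_X^{-1}$ gives $F_X^{-1}(1-\alpha_1)\ge F_X^{-1}(1-\alpha_2)$.

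For part b), writing $p:=1-\alpha$ and $q:=F_X^{-1}(p)=VaR_\alpha(X)$, the claim becomes the identity $F_{t(X)}^{-1}(p)=t(q)$, which I would prove by two inequalities. The upper bound $F_{t(X)}^{-1}(p)\le t(q)$ uses only monotonicity of $t$: since $X\le q$ implies $t(X)\le t(q)$, we get $F_{t(X)}(t(q))=\Prob(t(X)\le t(q))\ge \Prob(X\le q)=F_X(q)\ge p$, and the Galois equivalence applied to the distribution function of $t(X)$ then gives $F_{t(X)}^{-1}(p)\le t(q)$.

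The reverse inequality $F_{t(X)}^{-1}(p)\ge t(q)$ is the main obstacle, and it is exactly here that \emph{left-continuity} of $t$ is needed. I would show that every $y<t(q)$ satisfies $F_{t(X)}(y)<p$, so that $y\le F_{t(X)}^{-1}(p)$; letting $y\uparrow t(q)$ then yields $t(q)\le F_{t(X)}^{-1}(p)$. Fix $y<t(q)$. By left-continuity of $t$ at $q$ we have $t(q)=\lim_{x\uparrow q}t(x)$, so there exists $x_0<q$ with $t(x_0)>y$; monotonicity of $t$ then forces $\{t(X)\le y\}\subseteq\{X<x_0\}$. Hence $F_{t(X)}(y)\le \Prob(X< x_0)\le F_X(x_0)$, and since $x_0<q=\inf\{x:F_X(x)\ge p\}$ we have $F_X(x_0)<p$, giving the required \emph{strict} inequality $F_{t(X)}(y)<p$. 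The strictness is essential: knowing only $F_{t(X)}(y)\le p$ would not exclude $y$ from the level set, and it is precisely left-continuity that produces a point $x_0$ strictly below $q$ delivering it. Finally I would dispatch the degenerate boundary cases $\alpha\in\{0,1\}$ (where $q$ may be infinite) by a direct check from the definitions.
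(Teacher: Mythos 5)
Your proof is correct, and it matches the paper's intent exactly: the paper gives no proof of this proposition at all (it only remarks that the facts ``can be directly derived from the definition''), and your two-inequality argument for part b) via the Galois-type equivalence $F^{-1}(p)\le x \iff p\le F(x)$, with left-continuity of $t$ invoked precisely to produce the point $x_0<q$ giving the strict inequality $F_X(x_0)<p$, is the standard way to supply those omitted details. The only loose end is the boundary cases $\alpha\in\{0,1\}$, where $VaR_\alpha(X)$ can be infinite and the identity requires a convention for $t(\pm\infty)$; you flag this yourself, and it is immaterial to how the proposition is used in the paper.
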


For the solution of our optimization problem we make use of the following notions of stochastic orderings.

\begin{definition}
	Let $X,Y$ be two  random variables. Then $X$ is less than $Y$ in
	\begin{itemize}
		\item[(a)] \emph{usual stochastic order} ($X \leq_{st}Y$) if $ \Eop[f(X)] \leq \Eop[f(Y)]$ for all increasing $f:\R\to\R$,
		\item[(b)] \emph{convex order} ($X \leq_{cx}Y)$ if $ \Eop[f(X)] \leq \Eop[f(Y)]$ for all convex $f:\R\to\R$,
		\item[(c)] \emph{increasing convex order} ($X \leq_{icx}Y)$ if $ \Eop[f(X)] \leq \Eop[f(Y)]$ for all increasing convex $f:\R\to\R$,
	\end{itemize}
	whenever the expectations exist.
\end{definition}

The following characterizations of the increasing convex order are well-known (see e.g. \cite{ms02}, chap. 1)
\begin{proposition}\label{thm:stop-loss}
	Let $X,Y$ be two random variables with distribution functions $F_X, F_Y$.
\begin{itemize}
\item[a)]  $X \leq_{icx} Y$ holds if and only if $\Eop[(X-t)]_+ \leq \Eop[(Y-t)]_+$ for all $t \in \R$.
\item[b)] $X \leq_{icx} Y$ holds if and only if  there exists another random variable $Z$ with  $X \leq_{st} Z \leq_{cx} Y. $
\item[c)] Let $t_0 \in \R$ such that $F_X(t) \leq F_Y(t)$ for $t < t_0$, $F_X(t) \geq F_Y(t)$ for $t \geq t_0$ and $\Eop[X] \leq \Eop[Y]$ then $X \leq_{icx} Y$.
\end{itemize}
\end{proposition}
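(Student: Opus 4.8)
The plan is to prove (a) directly from the structure of increasing convex functions, and then to reduce both (b) and (c) to (a).

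\emph{Part (a).} The forward implication is immediate, since for each fixed $t$ the map $x\mapsto (x-t)_+$ is increasing and convex, so $X\leq_{icx}Y$ yields the stop-loss inequality. For the converse I would use that the increasing convex functions form a convex cone generated, under increasing pointwise limits, by the constants, the identity, and the elementary functions $x\mapsto(x-t)_+$, $t\in\R$. Concretely, approximate an arbitrary increasing convex $f$ from below by increasing convex piecewise-linear functions
\[ f_n(x)=c_n+b_n x+\sum_{k=1}^{m_n} w_{n,k}\,(x-t_{n,k})_+,\qquad b_n,w_{n,k}\ge 0, \]
with $f_n\uparrow f$. For each $f_n$, linearity of the expectation together with the hypothesis $\Eop[(X-t)_+]\le\Eop[(Y-t)_+]$, the nonnegativity of $b_n,w_{n,k}$, and the inequality $\Eop[X]\le\Eop[Y]$ gives $\Eop[f_n(X)]\le\Eop[f_n(Y)]$. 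Passing to the limit by monotone convergence (applied to $f_n-f_1\ge 0$) yields $\Eop[f(X)]\le\Eop[f(Y)]$. The one auxiliary fact needed, $\Eop[X]\le\Eop[Y]$, follows by letting $t\to-\infty$ in the hypothesis: for non-negative $X,Y$ one has $(X-t)_+=X-t$ once $t$ is sufficiently negative, so $\Eop[(X-t)_+]=\Eop[X]-t$.

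\emph{Part (b).} The implication ``$\Leftarrow$'' is a short chaining argument: for increasing convex $f$, the relation $X\leq_{st}Z$ gives $\Eop[f(X)]\le\Eop[f(Z)]$ because $f$ is increasing, and $Z\leq_{cx}Y$ gives $\Eop[f(Z)]\le\Eop[f(Y)]$ because $f$ is convex; composing the two yields $X\leq_{icx}Y$. The implication ``$\Rightarrow$'' is the substantial one, and it is where I expect the main obstacle to lie: one must construct the intermediate variable $Z$. I would build it through its distribution function, arranging $F_Z\le F_X$ pointwise (so that $X\leq_{st}Z$) together with $\Eop[Z]=\Eop[Y]$ and $\Eop[(Z-t)_+]\le\Eop[(Y-t)_+]$ for all $t$ (so that $Z\leq_{cx}Y$, using that convex order is equality of means plus stop-loss dominance, the analogue of (a) with convex in place of increasing convex functions). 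Establishing that these three constraints are simultaneously feasible given $X\leq_{icx}Y$ is a Strassen-type existence statement; since it is classical, one may alternatively invoke \cite{ms02}, but the explicit coupling is the genuinely nontrivial step.

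\emph{Part (c).} This I would deduce from (a). Using $\Eop[(X-t)_+]=\int_t^\infty S_X(x)\,dx$ and $S_Y-S_X=F_X-F_Y$, set
\[ g(t):=\Eop[(Y-t)_+]-\Eop[(X-t)_+]=\int_t^\infty\big(F_X(x)-F_Y(x)\big)\,dx, \]
so that $g(t)\to 0$ as $t\to\infty$ while $g(t)\to \Eop[Y]-\Eop[X]$ as $t\to-\infty$. For $t\ge t_0$ the integrand is non-negative on $[t,\infty)$ by the crossing hypothesis, whence $g(t)\ge 0$. For $t<t_0$ write $g(t)=\big(\Eop[Y]-\Eop[X]\big)+\int_{-\infty}^t\big(F_Y(x)-F_X(x)\big)\,dx$; here the integral runs over a region where $F_Y\ge F_X$, hence is non-negative, and $\Eop[Y]-\Eop[X]\ge 0$ by assumption, so again $g(t)\ge 0$. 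Thus $g\ge 0$ everywhere, i.e. the stop-loss inequality of (a) holds, and (a) delivers $X\leq_{icx}Y$. In summary, the only demanding ingredient is the construction of $Z$ in the forward direction of (b); parts (a), (c), and the reverse direction of (b) are routine once the stop-loss reformulation is in place.
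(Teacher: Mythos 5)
Your proof is correct, but be aware that the paper itself offers no proof of this proposition: it is stated as well-known with a pointer to \cite{ms02}, chap.~1, so the comparison is really between your argument and the literature. Parts (a), (c) and the implication ``$\Leftarrow$'' of (b) are complete as you give them; (c) is the classical Karlin--Novikoff cut criterion, and your reduction of it to (a) via $g(t)=\int_t^\infty\bigl(F_X(x)-F_Y(x)\bigr)\,dx$ is the standard route. The one substantive step you do not carry out -- the construction of $Z$ in ``$\Rightarrow$'' of (b) -- you defer to exactly the reference the paper cites, so you are no worse off than the paper; still, it can be closed explicitly with the machinery you already set up in (a), and this is worth recording. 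Write $\pi_X(t)=\Eop[(X-t)_+]$, $\pi_Y(t)=\Eop[(Y-t)_+]$ and define
\[ \pi_Z(t):=\max\bigl\{\pi_X(t),\,\Eop[Y]-t\bigr\}. \]
This function is convex, decreasing, tends to $0$ as $t\to\infty$, and satisfies $\pi_Z(t)+t\to\Eop[Y]$ as $t\to-\infty$; hence it is the stop-loss transform of a random variable $Z$ with $\Eop[Z]=\Eop[Y]$ and $F_Z(t)=1+\pi_Z'(t+)$. The difference $(\Eop[Y]-t)-\pi_X(t)$ is concave, is nonnegative in the limit $t\to-\infty$ (here you use $\Eop[X]\le\Eop[Y]$, which you derived in (a)), and tends to $-\infty$ as $t\to\infty$, so the two branches cross at most once, say at $t^*$: for $t<t^*$ the linear branch is active and $F_Z(t)=0\le F_X(t)$, while for $t\ge t^*$ one has $F_Z(t)=F_X(t)$; thus $X\leq_{st}Z$. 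Moreover $\pi_Z\le\pi_Y$ -- by the hypothesis of (a) on the first branch and by Jensen's inequality $\Eop[(Y-t)_+]\ge\Eop[Y]-t$ on the second -- which together with $\Eop[Z]=\Eop[Y]$ gives $Z\leq_{cx}Y$ by the convex-order analogue of (a). This explicit construction is in essence the proof in \cite{ms02}, and adding it would make your argument fully self-contained, which the paper's is not.
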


From now on we will restrict to law-invariant risk measures.

\begin{definition} We say that a law-invariant risk measure $\rho$ is consistent 
\begin{itemize}
\item[a)] with the usual stochastic order if $X \leq_{st}Y$ implies $\rho(X) \leq \rho(Y)$.
\item[a)] with the increasing convex order if $X \leq_{icx}Y$ implies $\rho(X) \leq \rho(Y)$.
\end{itemize}
\end{definition}

It is well-known (see e.g.\ \cite{bm06}, Theorem 4.2) that on non-atomic probability spaces monotonicity of the risk measure is enough to imply consistency w.r.t.\ the usual stochastic ordering and the additional  property of convexity of the risk measure is enough to imply consistency w.r.t.\ the increasing convex ordering (see e.g.\ \cite{bm06}, Theorem 4.4). Hence a large class of risk measures satisfy these consistency properties. We also have to recall the concept of a copula for random vectors.

\begin{definition}\label{def:copula}
	Let $X=(X_1, \dots, X_n)$ be a random vector with distribution function $F$ and marginal distribution functions $F_1, \dots, F_n$. An n-dimensional distribution function $C$ with uniform marginals on $[0,1]$ is called a \emph{copula} of $X$ if
	\begin{align*}
		F(x_1, \dots, x_n) = C \left( F_1(x_1), \dots, F_n(x_n) \right), \quad (x_1, \dots, x_n) \in \R^n.
	\end{align*} 
\end{definition}
One important example of a copula is the so-called Fr\'echet-Hoeffding bound which is given by $$C(x_1,\ldots,x_n) = \min\{x_1,\ldots,x_n\}. $$
The following observation will be crucial for us. It is implied by the definition of the copula.

\begin{corollary}\label{thm:common_copula}
	Let $X=(X_1,\dots, X_n)$ be a random vector with copula $C$ and $t_1,\dots, t_n$ be increasing and continuous functions. Then the random vector 
	\[\left(t_1(X_1), \dots, t_n(X_n) \right)\]
	also has copula $C$.
\end{corollary}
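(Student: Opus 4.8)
The plan is to verify directly that $C$ satisfies the defining functional equation of Definition \ref{def:copula} for the transformed vector $Y := (t_1(X_1), \dots, t_n(X_n))$. Writing $G$ for the joint distribution function of $Y$ and $G_1, \dots, G_n$ for its margins, it suffices to show that $G(y_1, \dots, y_n) = C\left(G_1(y_1), \dots, G_n(y_n)\right)$ holds for all $(y_1, \dots, y_n) \in \R^n$, since this is precisely the assertion that $C$ is a copula of $Y$.

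The crucial step is to rewrite each event $\{t_i(X_i) \leq y_i\}$ as an event of the form $\{X_i \leq s_i\}$. To this end I would introduce the generalized inverse $s_i := \sup\{x \in \R : \ t_i(x) \leq y_i\}$. Since $t_i$ is increasing, the sublevel set $\{x : \ t_i(x) \leq y_i\}$ is a half-line unbounded below, and since $t_i$ is continuous, its supremum $s_i$ is actually attained, i.e.\ $t_i(s_i) \leq y_i$. This yields the identity $\{t_i(X_i) \leq y_i\} = \{X_i \leq s_i\}$, with the usual conventions when the sublevel set is empty or all of $\R$.

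Granting this identity, the remainder is a short computation. Using that $C$ is a copula of $X$,
$$G(y_1, \dots, y_n) = \Prob(X_1 \leq s_1, \dots, X_n \leq s_n) = F(s_1, \dots, s_n) = C\left(F_1(s_1), \dots, F_n(s_n)\right).$$
Applying the same event identity coordinatewise gives $F_i(s_i) = \Prob(X_i \leq s_i) = \Prob(t_i(X_i) \leq y_i) = G_i(y_i)$, so substituting back produces $G(y_1, \dots, y_n) = C\left(G_1(y_1), \dots, G_n(y_n)\right)$, as desired.

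The main obstacle I anticipate is this event-rewriting step, which is where monotonicity and continuity of $t_i$ are both genuinely used: monotonicity guarantees that the sublevel set is an interval unbounded below, while continuity is what forces the boundary point $s_i$ to belong to it (a merely increasing but right-discontinuous $t_i$ could fail this). I would also treat separately the degenerate cases where $y_i$ lies below the infimum or above the supremum of the range of $t_i$, but these only make the corresponding factor $0$ or $1$ and remain consistent with both sides of the identity. Note that no assumption on the continuity of the margins $F_i$ is required, since I am only verifying the functional equation and not appealing to any uniqueness of the copula.
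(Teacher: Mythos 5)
Your proof is correct and matches the paper's approach: the paper states this corollary without proof, remarking only that it ``is implied by the definition of the copula,'' and your direct verification of the defining functional equation via the attained generalized inverse $s_i$ is precisely that argument made explicit. Two minor points: the degenerate case where a factor $G_i(y_i)$ equals $1$ (which actually occurs in the paper's application, since layer treaties are bounded) implicitly uses the standard Lipschitz continuity of copulas to pass to the limit in the $i$-th argument, and your parenthetical warning should blame failure of \emph{left}-continuity of $t_i$, not right-discontinuity, for non-attainment of the supremum of the sublevel set.
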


A random vector which possesses as copula the Fr\'echet-Hoeffding bound is called {\em comonotonic}.

\begin{definition}
A risk measure $\rho$ is called {\em comonotone additive} if for all random vectors $X=(X_1, \dots, X_n)$ which are comonotonic we have that $$\rho\Big(\sum_{i=1}^n X_i\Big)  = \sum_{i=1}^n \rho(X_i). $$
\end{definition} 

The next result follows directly from \cite{ms01}, Theorem 4.1.

\begin{theorem}\label{thm:common_copula_st}
	Let $X=(X_1, \dots, X_n)$ and $Y=(Y_1, \dots , Y_n)$ be two random vectors with a common copula $C$. Then $X_i \leq_{st} Y_i, \ i=1, \dots, n$ implies $\sum_{i=1}^{n} X_i \leq_{st} \sum_{i=1}^{n} Y_i$.
\end{theorem}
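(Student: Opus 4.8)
The plan is to realize both random vectors on one probability space through a single copula-distributed driver, so that the two sums become pathwise comparable. Concretely, I would take a random vector $(U_1,\dots,U_n)$ whose joint distribution function is the common copula $C$ and whose marginals are uniform on $[0,1]$, and set
$$\tilde X_i := F^{-1}_{X_i}(U_i), \qquad \tilde Y_i := F^{-1}_{Y_i}(U_i), \qquad i=1,\dots,n.$$
The idea is that routing the same dependence structure $C$ through the respective quantile functions should recover the original joint laws of $X$ and $Y$ while coupling them through the identical inputs $U_i$. Note that $F^{-1}_{X_i}(1-\alpha)=VaR_\alpha(X_i)$, so these quantile transforms are exactly the Value-at-Risk maps appearing earlier.

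First I would verify that $(\tilde X_1,\dots,\tilde X_n)$ has the same distribution as $X$. Using the Galois relation $F^{-1}_{X_i}(u)\le x_i \iff u\le F_{X_i}(x_i)$ for generalized inverses, one computes
$$\Prob(\tilde X_1\le x_1,\dots,\tilde X_n\le x_n) = \Prob\big(U_1\le F_{X_1}(x_1),\dots,U_n\le F_{X_n}(x_n)\big) = C\big(F_{X_1}(x_1),\dots,F_{X_n}(x_n)\big),$$
which equals the joint distribution function of $X$ precisely because $C$ is a copula of $X$ in the sense of Definition \ref{def:copula}. The identical computation with $Y$ in place of $X$ shows $(\tilde Y_1,\dots,\tilde Y_n)\stackrel{d}{=}Y$. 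Hence $\sum_i \tilde X_i \stackrel{d}{=}\sum_i X_i$ and $\sum_i \tilde Y_i \stackrel{d}{=}\sum_i Y_i$, so it suffices to compare the two tilded sums.

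The second ingredient is the quantile characterization of the usual stochastic order: $X_i\le_{st}Y_i$ is equivalent to $F^{-1}_{X_i}(u)\le F^{-1}_{Y_i}(u)$ for every $u\in[0,1]$. Evaluating this at $u=U_i$ gives $\tilde X_i\le \tilde Y_i$ almost surely for each $i$, whence $\sum_{i=1}^n \tilde X_i \le \sum_{i=1}^n \tilde Y_i$ almost surely. An almost sure inequality trivially implies the usual stochastic order, so $\sum_i \tilde X_i \le_{st}\sum_i \tilde Y_i$; combining this with the distributional identities of the previous step yields $\sum_{i=1}^n X_i \le_{st}\sum_{i=1}^n Y_i$.

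The only delicate point is the distributional identity $(\tilde X_i)_i\stackrel{d}{=}X$: it hinges on the common copula simultaneously governing both vectors, so that the single driver $(U_i)_i$ can be fed through the two families of (possibly discontinuous) quantile functions without disturbing either joint law. I expect the generalized-inverse bookkeeping — in particular the Galois equivalence and the monotone pointwise ordering $F^{-1}_{X_i}\le F^{-1}_{Y_i}$ — to be the step requiring the most care, whereas the passage from pathwise to stochastic domination of the sums is immediate.
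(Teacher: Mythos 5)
Your proof is correct and complete. Note that the paper itself does not prove this statement at all: it simply invokes Theorem 4.1 of M\"uller and Scarsini \cite{ms01}, which asserts that a common copula together with the marginal orderings $X_i \leq_{st} Y_i$ yields the multivariate stochastic order $X \leq_{st} Y$, from which the ordering of the sums follows because $(x_1,\dots,x_n)\mapsto \sum_{i} x_i$ is increasing. Your argument is the standard self-contained route behind that cited result: realize both vectors as quantile transforms $F^{-1}_{X_i}(U_i)$, $F^{-1}_{Y_i}(U_i)$ of a single driver $(U_1,\dots,U_n)\sim C$; the Galois relation $F^{-1}(u)\le x \iff u\le F(x)$ (valid because distribution functions are right-continuous) gives the two distributional identities, and the pointwise ordering of quantile functions (equivalent to $\leq_{st}$ of the marginals) gives almost sure componentwise domination, hence domination of the sums. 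In fact your coupling proves the stronger multivariate conclusion $\Eop[f(X)]\le \Eop[f(Y)]$ for every componentwise increasing $f$, i.e.\ exactly the statement of \cite{ms01}, Theorem 4.1; specializing to the sum recovers the theorem, so your argument buys self-containedness at essentially no extra cost. One cosmetic point: state the quantile characterization of $\leq_{st}$ for $u\in(0,1)$ rather than $[0,1]$ to avoid degenerate endpoint values of the generalized inverse; this is harmless here since $U_i\in(0,1)$ almost surely.
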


Finally we need a concept of positive dependence. 

\begin{definition}\label{def:copula}
	The random vector $X=(X_1, \dots, X_n)$ is said to be
	\begin{itemize}
		\item[(a)] \emph{positively dependent through the stochastic ordering} (PDS) if for all $i=1,\ldots,n$
		$$\Eop[f(X_1,\ldots, X_{i-1},X_{i+1},\ldots,X_n)| X_i=x_i]$$
		is increasing in $ x_{i}$ for any (componentwise) increasing function $f: \R^{n-1} \to \R$.
		\item[(b)] \emph{positive orthant dependent} (POD) if both
		\begin{align*}
			\Prob(X \leq x) \geq \prod_{i=1}^{n} \Prob(X_i \leq x_i) \qquad \text{and} \qquad \Prob(X > x) \geq \prod_{i=1}^{n} \Prob(X_i > x_i)
		\end{align*}
		holds for all $x \in \R^n$.
	\end{itemize}
\end{definition}

Obviously, PDS and POD are properties of the copula. PDS is less well-known than other dependence concepts like e.g.\ conditionally increasing (CI) or $MTP_2$.  In \cite{bss} Theorem 5.3 it is shown that CI implies PDS. For $n=2$ the two conditions in Definition \ref{def:copula} (b) are equivalent. If for $n>2$ only the first or the second one holds, $X$ is referred to as positive lower or upper orthant depended respectively. PDS implies POD, cf.\ \cite{bss} Theorem 5.1. To the best of our knowledge, POD is even the smallest established superclass of PDS.

We need the following Theorem.

\begin{theorem}\label{thm:common_copula_icx}
	Let $X=(X_1, \dots, X_n)$ and $Y=(Y_1, \dots , Y_n)$ be two random vectors with a common PDS copula $C$. Then $X_i \leq_{icx} Y_i, \ i=1, \dots, n$ implies $\sum_{i=1}^{n} X_i \leq_{icx} \sum_{i=1}^{n} Y_i$.
\end{theorem}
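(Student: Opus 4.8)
The plan is to prove the theorem by replacing the marginals one coordinate at a time and then chaining the resulting comparisons through the transitivity of $\leq_{icx}$. Using the copula representation I would realize everything on one probability space: let $(U_1,\dots,U_n)$ have distribution $C$ with uniform marginals and set
$V^{(k)} := \big(F_{Y_1}^{-1}(U_1),\dots,F_{Y_k}^{-1}(U_k),F_{X_{k+1}}^{-1}(U_{k+1}),\dots,F_{X_n}^{-1}(U_n)\big)$,
so that $V^{(0)}$ has the distribution of $X$, $V^{(n)}$ that of $Y$, and—by Corollary \ref{thm:common_copula}—every $V^{(k)}$ has copula $C$ and is therefore PDS. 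It then suffices to prove a one-coordinate replacement statement: if a vector with PDS copula has a single marginal enlarged in $\leq_{icx}$ while the copula and all other marginals are held fixed, then its component sum increases in $\leq_{icx}$. Applying this to pass from $V^{(k-1)}$ to $V^{(k)}$ (here coordinate $k$ changes from $X_k$ to $Y_k$, and $X_k\leq_{icx}Y_k$) and combining the chain gives $\sum_i X_i \leq_{icx} \sum_i Y_i$.

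For the one-coordinate step I would relabel so that the replaced coordinate is the $n$-th. The unchanged coordinates contribute a partial sum $S$ which is an increasing function of $(U_1,\dots,U_{n-1})$, while the replaced coordinate before and after is $a(U_n)$ respectively $b(U_n)$, where $a,b$ are increasing quantile functions with $a(U_n)\leq_{icx}b(U_n)$. For an increasing convex test function $g$ I would condition on $U_n$ and study $\Phi(u,c):=\Eop[g(S+c)\mid U_n=u]$, so that the assertion becomes $\int_0^1 \Phi(u,a(u))\,du \leq \int_0^1 \Phi(u,b(u))\,du$. Here $\Phi(u,\cdot)$ is increasing convex for each $u$, and—this is where PDS enters—$\Phi$ has increasing differences in $(u,c)$: for $c<c'$ the map $s\mapsto g(s+c')-g(s+c)$ is increasing by convexity of $g$, so $g(S+c')-g(S+c)$ is an increasing function of the other coordinates, and the PDS property forces its conditional expectation given $U_n=u$ to be increasing in $u$.

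To finish I would bound, for each $u$, $\Phi(u,b(u))-\Phi(u,a(u)) \geq \sigma(u)\,(b(u)-a(u))$ with a subgradient $\sigma(u)\in\partial_c\Phi(u,a(u))$; the increasing-differences property, the convexity of $\Phi(u,\cdot)$, and the monotonicity of $a$ together make $\sigma$ nonnegative and increasing in $u$. Writing $\delta=b-a$ and $R(p)=\int_p^1\delta(u)\,du$, the quantile characterization of the increasing convex order (Müller--Stoyan \cite{ms02}) yields $R\geq 0$ on $[0,1]$ with $R(1)=0$, and an integration by parts gives $\int_0^1 \sigma(u)\delta(u)\,du = \sigma(0)R(0)+\int_0^1 R\,d\sigma \geq 0$, since $\sigma\geq 0$, $d\sigma\geq 0$ and $R\geq 0$. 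Combining the two displayed inequalities shows $\int_0^1[\Phi(u,b(u))-\Phi(u,a(u))]\,du\geq 0$, which is the one-coordinate step.

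I expect the main obstacle to be exactly the extraction of this increasing-differences (supermodularity) property of $\Phi$ from the PDS hypothesis, since it is the monotonicity in the conditioning variable—and not merely the marginal relation $a(U_n)\leq_{icx}b(U_n)$—that makes the integration-by-parts estimate come out with the correct sign; the analogous $\leq_{st}$ statement in Theorem \ref{thm:common_copula_st} requires no such second-order information, which is precisely why PDS was not needed there. Some care will also be needed to handle non-differentiable $g$ by approximation and to select a monotone version of the subgradient $\sigma$, but these are routine technical points rather than genuine difficulties.
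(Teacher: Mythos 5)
Your proof is correct, but it takes a genuinely different route from the paper's. The paper's argument is a short reduction to known results: by Proposition \ref{thm:stop-loss} b) it picks intermediate variables $Z_i$ with $X_i \leq_{st} Z_i \leq_{cx} Y_i$, realizes them as the marginals of a vector $Z$ with the same copula $C$, then invokes Theorem \ref{thm:common_copula_st} (a common copula preserves $\leq_{st}$ under summation) for the first half and Corollary 3.5 of \cite{CW12} (a common PDS copula preserves $\leq_{cx}$ under summation) for the second half, gluing the two conclusions back into $\leq_{icx}$ via Proposition \ref{thm:stop-loss} b) again. You instead replace one marginal at a time and prove the one-coordinate step from scratch: conditioning on the replaced uniform, you extract supermodularity of $\Phi(u,c)=\Eop[g(S+c)\mid U_n=u]$ from PDS, select the monotone subgradient $\sigma(u)=\partial_c^+\Phi(u,a(u))$ (which is indeed increasing: increasing differences give $\partial_c^+\Phi(u,c)\leq\partial_c^+\Phi(u',c)$ for $u\leq u'$, and convexity of $\Phi(u',\cdot)$ together with $a(u)\leq a(u')$ gives $\partial_c^+\Phi(u',a(u))\leq\partial_c^+\Phi(u',a(u'))$, while monotonicity of $\Phi(u,\cdot)$ in $c$ gives $\sigma\geq 0$), and close with the quantile characterization of $\leq_{icx}$ and Stieltjes integration by parts. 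In effect you re-derive the Cai--Wei ingredient rather than citing it: the paper's decomposition buys brevity and reuse of the literature, while your argument buys self-containedness and makes visible exactly where PDS enters (the increasing-differences property), which also explains your correct observation that the $\leq_{st}$ analogue needs no dependence assumption. The technicalities you flag are best handled by reducing to the stop-loss test functions $g(x)=(x-t)_+$ via Proposition \ref{thm:stop-loss} a); this keeps $\Phi$ finite and $\sigma$ bounded, so the integration by parts is legitimate, and it sidesteps the fact that Corollary \ref{thm:common_copula} formally requires continuous transforms (for quantile transforms, the conclusion that $C$ is a copula of each $V^{(k)}$ follows directly from $F_i^{-1}(u)\leq x \Leftrightarrow u\leq F_i(x)$).
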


\begin{proof}
According to Proposition \ref{thm:stop-loss} b) there exist random variables $Z_i$ such that $X_i \leq_{st} Z_i \leq_{cx} Y_i$ for $ i=1, \dots, n$. Let them be the marginals of a random vector $Z=(Z_1, \dots, Z_n)$ with copula $C$.  From Theorem \ref{thm:common_copula_st} it follows that $\sum_{i=1}^{n} X_i \leq_{st} \sum_{i=1}^{n} Z_i$ and from Corollary 3.5 in \cite{CW12} it follows that $\sum_{i=1}^{n} Z_i \leq_{cx} \sum_{i=1}^{n} Y_i$. Hence the statement is again obtained with Proposition \ref{thm:stop-loss} b).
\end{proof}

A common PDS copula is the mildest possible assumption to obtain this result. The next example shows that weakening the dependence concept to POD is not possible. It is a simplified version of Example 4.7 in \cite{ms01}, who used it in a slightly different context.

\begin{example}
	Let $n=2$. We define the distribution of the random vector $Y=(Y_1,Y_2)$ by its discrete density
	\[
		\begin{array}{r||c|c|c|c}
			\Prob(Y_1=y_1,Y_2=y_2) & y_2=0 & y_2=1 & y_2=3 & y_2=4  \\ \hline \hline
			\rule{0pt}{13pt} y_1=0 & \frac{3}{12} & 0 & \frac{2}{12} & \frac{1}{12} \\ \hline
			\rule{0pt}{13pt} y_1=1 & \frac{1}{12} & \frac{2}{12} & 0 & \frac{3}{12}
		\end{array}
	\]
	It is readily checked that $Y$ is POD. However, it is not PDS since
	\[\Prob(Y_1>0| Y_2=1) = 1 > 0 = \Prob(Y_1>0 | Y_2 = 3).\]
	Let us define an increasing function $f$ by $f(0)=0$, $f(1)=f(3)=2$, $f(4)=4$ and set $X=(Y_1, f(Y_2))$. Then, $X$ and $Y$ have the same copula, cf.\ Corollary \ref{thm:common_copula}. Furthermore, it follows from Proposition \ref{thm:stop-loss} c) that $X_i \leq_{icx} Y_i$, $i=1,2$ but $Y_1 + Y_2 \leq_{icx} X_1+X_2$.
\end{example} 

\section{The Optimization Problem}\label{sec:optpro}
We consider an economy with $n$ insurance companies, numbered consecutively $i=1, \dots, n$, and a single reinsurance company. The single reinsurer is justified by the study in \cite{BTZ16}. It is assumed that  insurer $i$ bears a total risk modelled by a non-negative random variable $X_i \in L^1$. These risks are interpreted as discounted losses at the end of a fixed period due to insurance claims based on policies sold by the respective insurer. Note that these risks are not necessarily independent. Each insurer $i$ evaluates its risk with a translation invariant and positive homogeneous risk measure $\rho_i$ which will be specified later on. 

In order to reduce the risk borne autonomously, insurance company $i$ may cede a portion $f_i(X_i)$ to the reinsurer. The retained risk is then given by $R_{f_i}(X_i) = X_i-f_i(X_i)$. We assume that  $f_1, \dots, f_n: \R_+ \to \R_+$ are increasing and that $f_i(x) \leq x$ for all $x \in \R_+$ and all $i=1, \dots,n$. Moreover, in order to rule out moral hazard, the retained loss functions $R_{f_1}, \dots, R_{f_n}: \R_+ \to \R_+$  are assumed to be increasing as well.  We define the set of admissible ceded loss functions by
\[\mathcal{C} = \{f:\R_+ \to \R_+ | \ f(x) \leq x \ \forall x \in \R_+ \text{ and } f, R_f \text{ are increasing}\}.  \]
Note that functions in $\mathcal{C}$ are in particular Lipschitz-continuous, since $R_f$ increasing leads to $f(x_2)-f(x_1) \le x_2-x_1$ for all $0\le x_1\le x_2$.
Given all ceded risks, the reinsurer prices their sum with a premium principle $\pi:L^1 \to \bar{\R}$ to take account of the dependence structure of the individual risks. The reinsurer then determines the amount to be paid by each insurer with a premium allocation rule.
\begin{definition}\label{def:premium_allocation}
	For a given premium principle $\pi$ and aggregate ceded risk $Y,$ a linear functional $\psi_{\pi}(\cdot, Y): L^1 \to \bar \R$ with $\psi_{\pi}(Y, Y) = \pi(Y)$ is called \emph{premium allocation rule}. 
\end{definition}
That is, insurer $i$ has to pay the reinsurance premium $\psi_{\pi}\left(f_i(X_i), \sum_{j=1}^{n} f_j(X_j) \right)$. The setting is illustrated in Figure \ref{fig:ORP}.

\begin{figure}[h]
	\centering
	\begin{tikzpicture}
 \node[inner sep=0,minimum size=0] (z) {};	 \node[draw,circle,right of=z] (a) {$X_1$};
  \node[inner sep=0,minimum size=0,right of=a] (k) {}; 
  \node[draw,circle,right of=k] (b) {$X_2$};
  \node[draw,circle,below of=z] (c) {$X_6$};
   \node[inner sep=0,minimum size=0,right of=c] (l) {}; 
   \node[draw,circle,below of=k] (re) {RI};
    \node[inner sep=0,minimum size=0,right of=re] (m) {}; 
   \node[draw,circle,right of=m] (d) {$X_3$};
    \node[draw,circle,below of=l] (e) {$X_5$};
     \node[draw,circle,below of=m] (f) {$X_4$};
\draw[vecArrow] (a) to (re);
\draw[vecArrow] (b) to (re);
\draw[vecArrow] (c) to (re);
\draw[vecArrow] (d) to (re);
\draw[vecArrow] (e) to (re);
\draw[vecArrow] (f) to (re);

	\end{tikzpicture}
	\vspace{0.5cm}
	\caption{The optimal reinsurance problem in an economy with one reinsurer and $n$ insurers. Insurer $i$ cedes $f_i(X_i)$ of his original risk $X_i$.}\label{fig:ORP}
\end{figure}
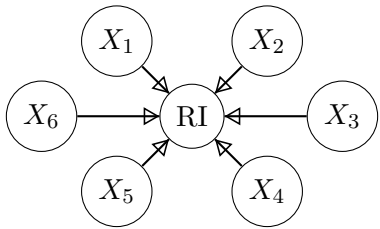

We are interested in a risk allocation which is macroeconomically optimal, i.e.\ which minimizes the aggregate capital requirement of all insurers. An insurance company's solvency (or risk) capital provides protection against insolvency due to large unexpected losses. Given that all capital requirements are representative for the borne risks, such an optimal allocation can be seen as desirable state of the economy with respect to financial stability. While $X_i$ models insurer $i$'s discounted loss due to insurance claims, the discounted net loss after reinsurance at the end of the fixed period is given by the retained loss less the insurer's premium income $\pi_i(X_i)$ in the period plus cost of reinsurance and cost of capital:
\begin{align}\label{eq:net_loss}
	Loss_i = R_{f_i}(X_i)-\pi_i(X_i) + \psi_{\pi}\Big(f_i(X_i), \sum_{j=1}^{n} f_j(X_j) \Big) + r_{CoC} \cdot
	 \rho_i(Loss_i).
\end{align}
Cost of capital is defined as some cost of capital rate $r_{CoC}$ times the capital requirement $\rho_i(Loss_i)$. We assume a uniform cost of capital rate $r_{CoC}$ for all insurance companies. The implicit description of the net loss (\ref{eq:net_loss}) goes back to \cite{k09}. Applying $\rho_i$ on both sides and using translation invariance and positive homogeneity, one obtains as capital requirement of insurer $i$
\[\rho_i(Loss_i) = \frac{1}{1-r_{CoC}} \left( \rho_i\big(R_{f_i}(X_i)\big) - \pi_i(X_i) + \psi_{\pi}\Big(f_i(X_i), \sum_{j=1}^{n} f_j(X_j) \Big)  \right). \]
Hence, the minimization objective becomes
\begin{equation}\label{eq:pre-ORP}
	\min\quad \sum_{i=1}^{n} \rho_i(Loss_i) \quad  s.t.\quad  f_1,\ldots ,f_n \in \mathcal{C}.
\end{equation}

As a strictly monotone transformation of the objective function it is sufficient to consider the sum of the capital requirements for the retained risks plus the total reinsurance premium
\[ \sum_{i=1}^{n} \left(\rho_i\big(R_{f_i}(X_i)\big) + \psi_{\pi}\Big(f_i(X_i), \sum_{j=1}^{n} f_j(X_j) \Big) \right) =  \sum_{i=1}^{n} \rho_i\big(R_{f_i}(X_i)\big) + \pi \left( \sum_{i=1}^{n} f_i(X_i) \right). \]
Therefore, we will study the optimal reinsurance problem 
\begin{equation}\label{eq:ORP}
\min\quad \sum_{i=1}^n \rho_i \big( R_{f_i}(X_i) \big) + \pi \left( \sum_{i=1}^{n} f_i(X_i) \right) \quad  s.t.\quad  f_1,\ldots ,f_n \in \mathcal{C}.
\end{equation}
In what follows we assume that the insurance companies use Range-Value-at-Risk as risk measure, i.e. for $i=1, \dots,n$ it holds $\rho_i=RVaR_{\alpha_i,\beta_i}$, where  $\alpha_i , \beta_i \geq 0$ such that $0 < \alpha_i + \beta_i \leq 1$.
 
As a superclass of Value-at-Risk and Expected Shortfall, the two most widely used risk measures in the financial industry, Range-Value-at-Risk is a natural choice to obtain results which cover most cases of practical relevance. Note that we require the sum of the parameters to be strictly positive to avoid the case of an infinite capital requirement.

\section{Reduction to an Optimization Problem with finite Dimension}\label{sec:solution}
\subsection{Case 1: Insurers use Value-at-Risk}\label{sec:VaR-min}
In this section we consider the special case that all insurers use Value-at-Risk, that is, $\beta_1= \dots=\beta_n=0$. So let $\rho_i= VaR_{\alpha_i}$ with $\alpha_i \in (0,1]$. Moreover, for the premium principle of the reinsurer we assume

\begin{description}
	\item[(A1)] The reinsurer's premium principle is consistent with the usual stochastic order, that is, $\pi(X) \leq \pi(Y)$ whenever $X \leq_{st} Y$ for $X,Y \in L^1$.
\end{description}

For reducing (\ref{eq:ORP}) to finite dimension, the following class of reinsurance treaties (i.e.\ ceded loss functions) plays a key role.
\begin{definition}\label{def:layer}
	The function $f: \R_+ \to \R_+$ given by
	\[f(x) = \min\{(x-a)_+,b\} = (x-a)_+ - (x-(a+b))_+, \quad a,b \geq 0\]
	is called \emph{layer reinsurance treaty} with deductible $a$ and upper bound $b$. In the degenerate case $b = \infty$ 
	\[f(x)=(x-a)_+ \]
	is referred to as \emph{stop-loss reinsurance treaty}.
\end{definition}
Note that all such functions belong to $\mathcal{C}$. Economically, layer reinsurance means that the reinsurer covers all losses exceeding the deductible but limits its liability to a maximum  $b$. The idea how to reduce (\ref{eq:ORP}) to finite dimension is to construct  layer reinsurances which are at least as good as  given, arbitrary ceded loss functions with respect to the objective function value. 

For each insurer $i=1, \dots, n$ let a ceded loss function $f_i \in \mathcal{C}$ be given and define $h_{f_i}: \R_+ \to \R_+$ by
\begin{align}
	h_{f_i}(x) = \min \Big\{ \Big(x - \big(VaR_{\alpha_i}(X_i) - f_i(VaR_{\alpha_i}(X_i)) \big)\Big)_+, f_i(VaR_{\alpha_i}(X_i)) \Big\rbrace . \label{eq:h_f}
\end{align}
These are layer reinsurance treaties with deductibles $\big(VaR_{\alpha_i}(X_i) - f_i(VaR_{\alpha_i}(X_i)) \big)$ and upper bounds $f_i(VaR_{\alpha_i}(X_i))$. It holds $h_{f_i} \in \mathcal{C}$ for $i=1, \dots, n$.

\begin{proposition} \label{thm:VaR_leq}
	Assume (A1) and let $f_1, \dots, f_n \in \mathcal{C}$ be arbitrary ceded loss functions. Then
	\begin{align*}
		\sum_{i=1}^n VaR_{\alpha_i} \Big( R_{h_{f_i}}(X_i) \Big) + \pi \Big( \sum_{i=1}^{n} h_{f_i}(X_i) \Big) \leq \sum_{i=1}^n VaR_{\alpha_i} \Big( R_{f_i}(X_i) \Big) + \pi \Big( \sum_{i=1}^{n} f_i(X_i) \Big).
	\end{align*}
\end{proposition}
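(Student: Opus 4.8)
The plan is to decouple the objective into its two parts and treat them separately: I would show that replacing each $f_i$ by the layer $h_{f_i}$ leaves every insurer's Value-at-Risk term unchanged, while it does not increase the reinsurer's premium. Since the first fact is an exact equality term by term and the second is an inequality, adding them gives the claim. The whole argument rests on the pointwise structure of the layers together with Proposition \ref{prop:varpro} and assumption (A1), so no information about the joint law of $(X_1,\dots,X_n)$ is needed.

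For the Value-at-Risk terms, write $q_i := VaR_{\alpha_i}(X_i)$. Because $f_i \in \mathcal{C}$ is Lipschitz (hence continuous) and $R_{f_i}$ is increasing, Proposition \ref{prop:varpro} b) yields $VaR_{\alpha_i}(R_{f_i}(X_i)) = R_{f_i}(q_i) = q_i - f_i(q_i)$. By construction $h_{f_i}$ is the layer with deductible $a_i = q_i - f_i(q_i)$ and cap $b_i = f_i(q_i)$, so evaluating at $x = q_i = a_i + b_i$ gives $h_{f_i}(q_i) = b_i = f_i(q_i)$ and therefore $R_{h_{f_i}}(q_i) = q_i - f_i(q_i)$. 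Since $R_{h_{f_i}}$ is again increasing and continuous, a second application of Proposition \ref{prop:varpro} b) gives $VaR_{\alpha_i}(R_{h_{f_i}}(X_i)) = q_i - f_i(q_i) = VaR_{\alpha_i}(R_{f_i}(X_i))$. Hence the two sums of Value-at-Risk terms agree summand by summand.

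For the premium term I would first establish the pointwise bound $h_{f_i}(x) \le f_i(x)$ for all $x \ge 0$ and all $i$, by distinguishing the three regions determined by $a_i$ and $q_i$. For $x \le a_i$ one has $h_{f_i}(x) = 0 \le f_i(x)$; for $x > q_i$ one has $h_{f_i}(x) = b_i = f_i(q_i) \le f_i(x)$ since $f_i$ is increasing; and for $a_i < x \le q_i$ one has $h_{f_i}(x) = x - a_i$, where the required inequality $x - a_i \le f_i(x)$ is exactly $R_{f_i}(x) \le R_{f_i}(q_i)$, which holds because $R_{f_i}$ is increasing and $x \le q_i$. Summing the pointwise bounds gives $\sum_{i=1}^n h_{f_i}(X_i) \le \sum_{i=1}^n f_i(X_i)$ almost surely, hence $\sum_{i=1}^n h_{f_i}(X_i) \le_{st} \sum_{i=1}^n f_i(X_i)$, and (A1) then delivers $\pi\big(\sum_{i=1}^n h_{f_i}(X_i)\big) \le \pi\big(\sum_{i=1}^n f_i(X_i)\big)$. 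Combining this with the equality of the Value-at-Risk terms proves the proposition.

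The one place where dependence among the $X_i$ might seem to cause trouble is the premium comparison, but establishing the inequality \emph{pointwise before summing} makes the usual stochastic order hold regardless of the copula, so this is no real obstacle. The step demanding the most care is the middle-region estimate $x - a_i \le f_i(x)$; this is precisely where the no-moral-hazard requirement that $R_{f_i}$ be increasing (built into the definition of $\mathcal{C}$) is essential, and I would flag it as the crux of the argument.
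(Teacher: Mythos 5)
Your proof is correct and follows the same overall strategy as the paper: equality of the Value-at-Risk terms via Proposition \ref{prop:varpro}, the pointwise domination $h_{f_i}(x) \le f_i(x)$ on $\R_+$ (your three-region case split is the same estimate the paper obtains in two cases, with your middle-region inequality $x - a_i \le f_i(x)$ being exactly the paper's rearrangement of $R_{f_i}(x)\le R_{f_i}(q_i)$), and then the premium comparison via (A1). The one place you genuinely diverge is the passage from the pointwise bounds to the comparison of the aggregate ceded risks: the paper records $h_{f_i}(X_i) \le_{st} f_i(X_i)$ for each $i$ and then invokes Theorem \ref{thm:common_copula_st} (marginal stochastic ordering plus a common copula implies stochastic ordering of the sums), whereas you simply add the pointwise inequalities to get $\sum_{i=1}^n h_{f_i}(X_i) \le \sum_{i=1}^n f_i(X_i)$ almost surely, which immediately yields the usual stochastic order of the sums. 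Your route is more elementary and arguably preferable here: since the two sums live on the same probability space and are compared $\omega$-by-$\omega$, no copula machinery is needed, and your argument makes completely transparent why the conclusion holds regardless of the dependence structure of $(X_1,\dots,X_n)$ --- the point emphasized in Theorem \ref{thm:reduction_VaR}. The heavier common-copula tool becomes genuinely necessary only in the Range-Value-at-Risk case (Proposition \ref{thm:RVaR_leq}), where the comparison $k_{f_i}(X_i) \le_{icx} f_i(X_i)$ is distributional rather than pointwise and the PDS assumption really enters.
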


\begin{proof}
	Let $i \in \{1, \dots, n\}$ be arbitrary. By inserting we get $h_{f_i}(VaR_{\alpha_i}(X_i)) = f_i(VaR_{\alpha_i}(X_i))$. The functions $h_{f_i}, f_i$ are in $\mathcal{C}$ and therefore $R_{h_{f_i}}, R_{f_i}$ are increasing and continuous. Thus, the properties of $VaR_{\alpha_i} $ (see Proposition \ref{prop:varpro}) imply
	\begin{align}
		VaR_{\alpha_i} \Big(R_{h_{f_i}}(X_i) \Big) &= R_{h_{f_i}}\big( VaR_{\alpha_i}(X_i) \big) \notag \\ 
		&= VaR_{\alpha_i}(X_i)- h_{f_i}\big(VaR_{\alpha_i}(X_i) \big) \notag \\
		& = VaR_{\alpha_i}(X_i)- f_i\big(VaR_{\alpha_i}(X_i) \big) \notag \\
		&= R_{f_i}\big( VaR_{\alpha_i}(X_i) \big) \notag \\
		&= VaR_{\alpha_i} \Big(R_{f_i}(X_i) \Big). \label{eq:R_f_VaR}
	\end{align}
	Now, we show $h_{f_i}(x) \leq f_i(x)$  for all  $x \in \R_+$. Since $R_{f_i}$ is increasing it follows for all  $x \leq VaR_{\alpha_i}(X_i)$ that
	\[f_i\big(VaR_{\alpha_i}(X_i)\big) - f_i(x)  \leq VaR_{\alpha_i}(X_i) - x. \]
	Rearranging and using that $f_i \geq 0$ gives
	\[f_i(x) \geq \left( x - VaR_{\alpha_i}(X_i) + f_i\big(VaR_{\alpha_i}(X_i)\big)\right)_+ = h_{f_i}(x),\]
where the equality holds due to $x \leq VaR_{\alpha_i}(X_i)$. For $x \geq VaR_{\alpha_i}(X_i)$ we have
\[h_{f_i}(x) = f_i\big(VaR_{\alpha_i}(X_i)\big) \leq f_i(x), \]
since $f_i$ is increasing. Hence, we have shown $h_{f_i}(x) \leq f_i(x)$  for all  $x \in \R_+$ which in turn implies
\begin{align}
	h_{f_i}(X_i) \leq_{st} f_i(X_i). \label{eq:h_f-st-f}
\end{align}
 As $i$ was arbitrary, Theorem \ref{thm:common_copula_st} yields
\[ \sum_{i=1}^{n} h_{f_i}(X_i) \leq_{st} \sum_{i=1}^{n} f_i(X_i).\]
By (A1) it follows
\begin{align}
\pi\Big( \sum_{i=1}^{n} h_{f_i}(X_i) \Big) \leq \pi\Big(\sum_{i=1}^{n} f_i(X_i) \Big). \label{eq:pi1}
\end{align}
(\ref{eq:R_f_VaR}) and (\ref{eq:pi1}) together yield the assertion.
\end{proof}	

\begin{theorem} \label{thm:reduction_VaR}
	Let all insurers use Value-at-Risk with parameters $\alpha_1, \dots, \alpha_n$ and assume (A1). Regardless of the dependence structure between their individual risks $X_1, \dots, X_n$  the optimal reinsurance problem (\ref{eq:ORP}) has the same optimal value as the finite dimensional problem
	\begin{align}\label{eq:FORP_1}
	\begin{aligned}
{\min} \quad  \sum_{i=1}^n a_i + \pi \Big( \sum_{i=1}^{n} \min\{(X_i-a_i)_+, VaR_{\alpha_i}(X_i) - a_i\}\Big) \quad s.t. \quad  0 \leq a_i \leq VaR_{\alpha_i}(X_i), \forall  i.
	\end{aligned}
	\end{align}
	Furthermore, if $a_1, \dots, a_n$ is an optimal solution to (\ref{eq:FORP_1}) then
	\[f_i(x) = \min\{(x-a_i)_+, VaR_{\alpha_i}(X_i) - a_i\}, \quad i=1, \dots,n \] is an optimal solution to (\ref{eq:ORP}).
\end{theorem}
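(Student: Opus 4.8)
The plan is to establish the equality of optimal values via two matching inequalities, relying on Proposition \ref{thm:VaR_leq} for the substantive direction, and to observe that the layer treaties appearing there are precisely the ones parameterized by the scalars $a_i$ in (\ref{eq:FORP_1}). The whole argument rests on the identity $h_{f_i}(x) = \min\{(x-a_i)_+, VaR_{\alpha_i}(X_i) - a_i\}$ with $a_i = VaR_{\alpha_i}(X_i) - f_i(VaR_{\alpha_i}(X_i))$, which turns the infinite-dimensional search over $\mathcal{C}^n$ into a finite-dimensional one over box-constrained vectors.

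First I would prove that the value of (\ref{eq:FORP_1}) dominates that of (\ref{eq:ORP}), i.e.\ the ``$\leq$'' direction. Given any feasible $(a_1, \dots, a_n)$ for (\ref{eq:FORP_1}), I set $f_i(x) = \min\{(x-a_i)_+, VaR_{\alpha_i}(X_i) - a_i\}$. Since $0 \leq a_i \leq VaR_{\alpha_i}(X_i)$, these are genuine layer treaties and hence lie in $\mathcal{C}$. Using Proposition \ref{prop:varpro} b) together with the fact that $R_{f_i}$ is increasing and left-continuous, I compute $VaR_{\alpha_i}(R_{f_i}(X_i)) = R_{f_i}(VaR_{\alpha_i}(X_i)) = VaR_{\alpha_i}(X_i) - f_i(VaR_{\alpha_i}(X_i)) = a_i$, the last step using $f_i(VaR_{\alpha_i}(X_i)) = VaR_{\alpha_i}(X_i) - a_i$. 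Substituting shows that the objective of (\ref{eq:ORP}) at these $f_i$ equals the objective of (\ref{eq:FORP_1}) at $(a_1, \dots, a_n)$, which gives the inequality for the optimal values.

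For the reverse inequality ``$\geq$'', I would take an arbitrary admissible $(f_1, \dots, f_n) \in \mathcal{C}^n$ and invoke Proposition \ref{thm:VaR_leq}, which bounds the objective of (\ref{eq:ORP}) from below by the same objective evaluated at the layer treaties $h_{f_i}$ from (\ref{eq:h_f}). The key observation is that with $a_i := VaR_{\alpha_i}(X_i) - f_i(VaR_{\alpha_i}(X_i))$ one has exactly $h_{f_i}(x) = \min\{(x-a_i)_+, VaR_{\alpha_i}(X_i) - a_i\}$, the treaties parameterized in (\ref{eq:FORP_1}). Because $f_i \in \mathcal{C}$ forces $0 \leq f_i(VaR_{\alpha_i}(X_i)) \leq VaR_{\alpha_i}(X_i)$, the induced $a_i$ satisfies $0 \leq a_i \leq VaR_{\alpha_i}(X_i)$, so $(a_1, \dots, a_n)$ is feasible for (\ref{eq:FORP_1}). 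Combining the computation $VaR_{\alpha_i}(R_{h_{f_i}}(X_i)) = a_i$, which is exactly (\ref{eq:R_f_VaR}), with Proposition \ref{thm:VaR_leq} shows the value of (\ref{eq:ORP}) is bounded below by the value of (\ref{eq:FORP_1}) at a feasible point, hence by its optimal value.

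Putting the two inequalities together yields equality of the optimal values. For the final assertion, an optimal $(a_1, \dots, a_n)$ for (\ref{eq:FORP_1}) produces, via the ``$\leq$'' construction, admissible layer treaties $f_i$ whose value in (\ref{eq:ORP}) equals the optimal value of (\ref{eq:FORP_1}), which by the established equality is the optimal value of (\ref{eq:ORP}); hence these $f_i$ are optimal. I do not expect a serious obstacle, since Proposition \ref{thm:VaR_leq} already carries the analytic weight: the remaining work is the bookkeeping of the scalar correspondence $a_i \leftrightarrow (VaR_{\alpha_i}(X_i) - f_i(VaR_{\alpha_i}(X_i)))$ and the verification of the box constraints. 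If anything is delicate, it is checking that this correspondence between admissible layer treaties and feasible vectors $(a_1, \dots, a_n)$ is both onto and value-preserving in each direction.
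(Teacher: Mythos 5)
Your proposal is correct and takes essentially the same route as the paper: one direction by evaluating the objective of (\ref{eq:ORP}) at the layer treaties $f_i(x)=\min\{(x-a_i)_+, VaR_{\alpha_i}(X_i)-a_i\}$ induced by a feasible $(a_1,\dots,a_n)$, and the reverse direction by invoking Proposition \ref{thm:VaR_leq} together with the identification $a_i = VaR_{\alpha_i}(X_i) - f_i\big(VaR_{\alpha_i}(X_i)\big)$. The bookkeeping you make explicit (that $h_{f_i}$ is exactly the treaty parameterized by this $a_i$, and that the box constraints hold) is precisely what the paper's terser appeal to Proposition \ref{thm:VaR_leq} leaves implicit.
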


\begin{proof}
	Let $i \in \{1, \dots, n\}$,  $a_i \leq VaR_{\alpha_i}(X_i)$ and 
\[f_i(x) = \min\{(x-a_i)_+, VaR_{\alpha_i}(X_i) - a_i\}. \] 	
	Then $f_i \in \mathcal{C}$ and since $VaR_{\alpha_i} \big(R_{f_i}(X_i) \big)= R_{f_i}\big(VaR_{\alpha_i}(X_i)\big)=a_i$ the target function in (\ref{eq:ORP}) reduces to \eqref{eq:FORP_1}. Hence we have that the minimum value of \eqref{eq:FORP_1} is at least as large as that of \eqref{eq:ORP}. On the other hand, Proposition \ref{thm:VaR_leq} yields that the minimum value of \eqref{eq:ORP} is at least as large as that of \eqref{eq:FORP_1}. Hence, the optimal values are equal and the statement follows.
\end{proof}

\begin{remark}
Note that in the setting of Theorem \ref{thm:reduction_VaR} an optimal solution always exists when $\pi$ is Lipschitz continuous w.r.t.\ the supremum norm. This is due to the fact that the target function is then Lipschitz-continuous w.r.t.\ any norm of $\R^n$ and  minimized over a compact set. The existence of a factor $\lambda>0$ such that $\lambda \pi$ becomes translation invariant is a sufficient condition for the Lipschitz-continuity of $\pi$. Moreover, note that the random variables which appear as an argument of $\pi$ are bounded. 
\end{remark}

\subsection{Case 2: Insurers use a general Range-Value-at-Risk}
This section covers the general case when at least one insurance company uses Range-Value-at-Risk with a positive parameter $\beta$. Hence, there is an $l \in \{1, \dots, n\}$ such that
\begin{align*}
	\rho_i &= RVaR_{\alpha_i, \beta_i}, && \beta_i>0, \quad  i=1, \dots, l\\
	\rho_i &= RVaR_{\alpha_i,\beta_i}= VaR_{\alpha_i}, && \beta_i=0, \quad  i=l+1, \dots, n.
\end{align*}

In this case we need stronger assumptions, in particular an assumption about the dependence structure of the individual risks. If these risks show some kind of negative dependence then obviously it is preferable to concentrate these risks in the reinsurer's portfolio. More interesting and maybe more realistic is the case of some kind of positive dependence. It arises for example as a result of natural catastrophes. Here we assume

\begin{description}
	\item[(A2)] The reinsurer's premium principle is consistent with  the increasing convex order, that is, $\pi(X) \leq \pi(Y)$ whenever $X \leq_{icx} Y$ for $X,Y \in L^1$.
	\item[(A3)] The random vector $X:=(X_1, \dots, X_n)$ has a PDS copula.
\end{description}	

Let $f_1, \dots, f_n \in \mathcal{C}$ be given ceded loss functions of the insurers. For all $i =1, \dots, n$ define the layer reinsurance treaty $k_{f_i}: \R_+ \to \R_+$,
\begin{align}
k_{f_i}(x) = \min \Big\lbrace \Big(x - \big(VaR_{\alpha_i+ \beta_i}(X_i) - f_i(VaR_{\alpha_i + \beta_i}(X_i)) \big)\Big)_+, \mathcal{M}_i \Big\rbrace, \label{eq:k_f_1}
\end{align}
where $\mathcal{M}_i \in [f_i(VaR_{\alpha_i+ \beta_i}(X_i)), f_i(VaR_{\alpha_i}(X_i))]$ is chosen such that $$RVaR_{\alpha_i, \beta_i}(f_i(X_i)) = RVaR_{\alpha_i, \beta_i}(k_{f_i}(X_i)).$$ Note that $k_{f_i} \in \mathcal{C}$ for $i=1, \dots, n$.

\begin{proposition}\label{thm:well-def}
	The layer reinsurance treaty $k_{f_i}$ is well-defined for all $i=1, \dots, n$. Besides, for $i > l$ it is identical with the layer reinsurance treaty $h_{f_i}$ of Section \ref{sec:VaR-min}.
\end{proposition}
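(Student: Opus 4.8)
The plan is to regard $k_{f_i}$ as a layer with the fixed deductible $a_i := VaR_{\alpha_i+\beta_i}(X_i) - f_i\big(VaR_{\alpha_i+\beta_i}(X_i)\big)$ and a free upper bound $M$, and to show that the map
\[
\varphi_i(M) := RVaR_{\alpha_i,\beta_i}\Big(\min\{(X_i-a_i)_+,\,M\}\Big)
\]
attains the target value $RVaR_{\alpha_i,\beta_i}(f_i(X_i))$ for some $M=\mathcal{M}_i$ in the prescribed interval $[f_i(Q_i),\,f_i(q_i)]$, where I abbreviate $q_i:=VaR_{\alpha_i}(X_i)$ and $Q_i:=VaR_{\alpha_i+\beta_i}(X_i)$ (so $Q_i\le q_i$ by Proposition \ref{prop:varpro} a)). For $i\le l$ (where $\beta_i>0$) this is an intermediate value argument; the case $i>l$ (where $\beta_i=0$) is treated separately and simultaneously yields the identification with $h_{f_i}$.

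First I would pass to an explicit integral representation via Proposition \ref{prop:varpro} b): since every layer and each $f_i$ is increasing and (Lipschitz-)continuous, $VaR_s$ commutes with them, so
\[
\varphi_i(M)=\frac{1}{\beta_i}\int_{\alpha_i}^{\alpha_i+\beta_i}\min\big\{(VaR_s(X_i)-a_i)_+,\,M\big\}\,ds,\qquad RVaR_{\alpha_i,\beta_i}(f_i(X_i))=\frac{1}{\beta_i}\int_{\alpha_i}^{\alpha_i+\beta_i}f_i(VaR_s(X_i))\,ds.
\]
Because $M\mapsto\min\{c,M\}$ is nondecreasing and $1$-Lipschitz uniformly in $c$, the integrand is monotone and $1$-Lipschitz in $M$, whence $\varphi_i$ is continuous and increasing; continuity alone is what the intermediate value theorem requires.

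The heart of the matter is the bracketing of $RVaR_{\alpha_i,\beta_i}(f_i(X_i))$ by the endpoint values of $\varphi_i$, which I carry out pointwise on the integration range, i.e.\ for $y:=VaR_s(X_i)\in[Q_i,q_i]$. At the lower endpoint $M=f_i(Q_i)$: since $y\ge Q_i$ one gets $(y-a_i)_+=y-Q_i+f_i(Q_i)\ge f_i(Q_i)$, so the integrand collapses to the constant $f_i(Q_i)$ and $\varphi_i(f_i(Q_i))=f_i(Q_i)\le RVaR_{\alpha_i,\beta_i}(f_i(X_i))$, the last step using $f_i(y)\ge f_i(Q_i)$. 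At the upper endpoint $M=f_i(q_i)$ I claim $\min\{(y-a_i)_+,\,f_i(q_i)\}\ge f_i(y)$ for all $y\in[Q_i,q_i]$: the no-moral-hazard condition $R_{f_i}$ increasing gives $f_i(y)-f_i(Q_i)\le y-Q_i$, which settles the case where the minimum is the first term, while $y\le q_i$ settles the case where it equals $f_i(q_i)$. Integrating yields $\varphi_i(f_i(q_i))\ge RVaR_{\alpha_i,\beta_i}(f_i(X_i))$, and continuity then furnishes $\mathcal{M}_i\in[f_i(Q_i),f_i(q_i)]$ with $\varphi_i(\mathcal{M}_i)=RVaR_{\alpha_i,\beta_i}(f_i(X_i))$. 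I expect this pointwise comparison at the upper endpoint to be the main obstacle, since it is the one step that genuinely exploits $f_i\in\mathcal{C}$ rather than mere monotonicity of $RVaR$.

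Finally, for $i>l$ we have $\beta_i=0$, hence $Q_i=q_i$ and the admissible interval $[f_i(Q_i),f_i(q_i)]$ degenerates to the single point $f_i(VaR_{\alpha_i}(X_i))$; this forces $\mathcal{M}_i=f_i(VaR_{\alpha_i}(X_i))$ and the deductible $a_i=VaR_{\alpha_i}(X_i)-f_i(VaR_{\alpha_i}(X_i))$, so the defining formula for $k_{f_i}$ becomes verbatim that of $h_{f_i}$ in \eqref{eq:h_f}. Well-definedness is then immediate once one checks the constraint $VaR_{\alpha_i}(f_i(X_i))=VaR_{\alpha_i}(k_{f_i}(X_i))$, which follows from Proposition \ref{prop:varpro} b) together with the identity $h_{f_i}\big(VaR_{\alpha_i}(X_i)\big)=f_i\big(VaR_{\alpha_i}(X_i)\big)$ already established in the proof of Proposition \ref{thm:VaR_leq}. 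This proves both assertions.
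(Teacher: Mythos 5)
Your proof is correct and follows essentially the same route as the paper: an intermediate value argument in the upper bound $\mathcal{M}_i$, bracketed by exactly the same two endpoint evaluations (the layer collapsing to the constant $f_i(VaR_{\alpha_i+\beta_i}(X_i))$ from below, and the pointwise domination $\min\{(y-a_i)_+,\,f_i(VaR_{\alpha_i}(X_i))\}\ge f_i(y)$ from above via the no-moral-hazard bound $f_i(y)-f_i(Q_i)\le y-Q_i$), with the degenerate case $\beta_i=0$ yielding the identification $k_{f_i}=h_{f_i}$. Your only refinement is deducing continuity of $M\mapsto\varphi_i(M)$ from the uniform $1$-Lipschitz property of the integrand, which is slightly cleaner than the paper's appeal to pointwise continuity of $\mathcal{M}_i \mapsto k_{f_i}(VaR_s(X_i))$ under the integral sign.
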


\begin{proof}
	For $k_{f_i}$ to be well-defined we have to show that
	$\mathcal{M}_i \in$ $ [f_i(VaR_{\alpha_i+ \beta_i}(X_i)), f_i(VaR_{\alpha_i}(X_i))]$ exists such that \[RVaR_{\alpha_i, \beta_i}(f_i(X_i)) = RVaR_{\alpha_i, \beta_i}(k_{f_i}(X_i)).\]
	First, we consider the simpler case $i > l$. Here, $\beta_i = 0$ and hence $\mathcal{M}_i= f_i(VaR_{\alpha_i}(X_i))$.
	
	Now, let $i \leq l$. Independently of $\mathcal{M}_i $ it holds
	\[ RVaR_{\alpha_i, \beta_i}(k_{f_i}(X_i)) \leq RVaR_{\alpha_i, \beta_i}(X_i) < \infty \]
since $X_i \in L^1$. Furthermore, $\mathcal{M}_i \mapsto k_{f_i}(VaR_s(X_i))$ is a continuous function for every $s$. Hence we obtain   continuity of
	\begin{align}
		\mathcal{M}_i \mapsto RVaR_{\alpha_i, \beta_i}(k_{f_i}(X_i)) = \frac{1}{\beta_i} \int_{\alpha_i}^{\alpha_i + \beta_i} k_{f_i}(VaR_s(X_i)) ds. \label{eq:cont_in_M}
	\end{align}
	Since $f_i \in \mathcal{C}$, the monotonicity of $R_{f_i}$ implies for all $ x \geq VaR_{\alpha_i + \beta_i}(X_i)$ that
	\begin{align}
	f_i(x) &\leq x- VaR_{\alpha_i + \beta_i}(X_i)+ f_i(VaR_{\alpha_i+ \beta_i}(X_i)) \notag \\
	& = \Big(x - \big(VaR_{\alpha_i + \beta_i}(X_i) - f_i(VaR_{\alpha_i + \beta_i}(X_i)) \big)\Big)_+ \label{eq:f-k_f}
	\end{align}
	Thus for $\mathcal{M}_i = f_i \left(VaR_{\alpha_i}(X_i) \right)$ it follows from (\ref{eq:f-k_f}) and the properties of $VaR_{\alpha_i}$ that 
	\begin{align}
	RVaR_{\alpha_i , \beta_i}(f_i(X_i)) &= \frac{1}{\beta_i} \int_{\alpha_i}^{\alpha_i + \beta_i} VaR_s(f_i(X_i)) ds = \frac{1}{\beta_i} \int_{\alpha_i}^{\alpha_i + \beta_i} f_i(VaR_s(X_i)) ds \notag \\
	& \leq \frac{1}{\beta_i} \int_{\alpha_i}^{\alpha_i + \beta_i} k_{f_i}(VaR_s(X_i)) ds \notag \\
	&=  \frac{1}{\beta_i} \int_{\alpha_i}^{\alpha_i + \beta_i} VaR_s(k_{f_i}(X_i)) ds  = RVaR_{\alpha_i , \beta_i}(k_{f_i}(X_i)).\label{eq:RVaR_leq} 
	\end{align}
	Conversely, for $\mathcal{M}_i = VaR_{\alpha_i+\beta_i}(f_i(X_i))$ it holds
	\begin{align}
	RVaR_{\alpha_i, \beta_i}(k_{f_i}(X_i)) = VaR_{\alpha_i+\beta_i}(f_i(X_i)) \leq RVaR_{\alpha_i ,\beta_i}(f_i(X_i)). \label{eq:RVaR_geq}
	\end{align}
	In view of the continuity of (\ref{eq:cont_in_M}) together with (\ref{eq:RVaR_leq}) and (\ref{eq:RVaR_geq}), the intermediate value theorem ensures the existence of $\mathcal{M}_i \in [f_i(VaR_{\alpha_i+ \beta_i}(X_i)), f_i(VaR_{\alpha_i}(X_i))]$ such that the statement $RVaR_{\alpha_i, \beta_i}(f_i(X_i)) = RVaR_{\alpha_i, \beta_i}(k_{f_i}(X_i))$ follows.
\end{proof}

The reinsurance treaties $k_{f_i}$ can be used to reduce the aggregate capital requirement of all insurers.
\begin{proposition} \label{thm:RVaR_leq}
	Assume (A2), (A3) and let $f_1, \dots, f_n \in \mathcal{C}$ be arbitrary ceded loss functions. Then
	\begin{align*}
	&\sum_{i=1}^n RVaR_{\alpha_i, \beta_i} \big( R_{k_{f_i}}(X_i) \big)  + \pi \Big( \sum_{i=1}^{n} k_{f_i}(X_i) \Big)  \leq  \sum_{i=1}^n RVaR_{\alpha_i, \beta_i} \big( R_{f_i}(X_i) \big) + \pi \Big( \sum_{i=1}^{n} f_i(X_i) \Big).
	\end{align*}
\end{proposition}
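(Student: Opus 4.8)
The plan is to decompose the inequality exactly as in the proof of Proposition \ref{thm:VaR_leq}: show that the retained-risk terms agree summand by summand, and that the premium term can only decrease when $f_i$ is replaced by $k_{f_i}$. First I would treat the retained risks. Fix $i$. Since $f_i,k_{f_i}\in\mathcal{C}$, the retained loss functions $R_{f_i},R_{k_{f_i}}$ are increasing and continuous, so Proposition \ref{prop:varpro} b) gives $VaR_s\big(R_{f_i}(X_i)\big)=R_{f_i}\big(VaR_s(X_i)\big)$ and similarly for $k_{f_i}$. Integrating over $[\alpha_i,\alpha_i+\beta_i]$ and using $R_g(x)=x-g(x)$ yields
\[RVaR_{\alpha_i,\beta_i}\big(R_{f_i}(X_i)\big)=RVaR_{\alpha_i,\beta_i}(X_i)-RVaR_{\alpha_i,\beta_i}\big(f_i(X_i)\big),\]
and the same identity with $k_{f_i}$ in place of $f_i$. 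By the defining property of $\mathcal{M}_i$ we have $RVaR_{\alpha_i,\beta_i}\big(f_i(X_i)\big)=RVaR_{\alpha_i,\beta_i}\big(k_{f_i}(X_i)\big)$, so the two retained terms coincide (for $i>l$ this is already \eqref{eq:R_f_VaR}). Hence the retained-risk sums on both sides are equal.

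It then remains to show $\pi\big(\sum_i k_{f_i}(X_i)\big)\le\pi\big(\sum_i f_i(X_i)\big)$. By (A2) it suffices to prove $\sum_i k_{f_i}(X_i)\le_{icx}\sum_i f_i(X_i)$. Because $f_i,k_{f_i}$ are increasing and continuous, Corollary \ref{thm:common_copula} shows that both $(k_{f_1}(X_1),\dots,k_{f_n}(X_n))$ and $(f_1(X_1),\dots,f_n(X_n))$ carry the common copula of $X$, which is PDS by (A3). Thus Theorem \ref{thm:common_copula_icx} reduces the aggregate inequality to the marginal comparisons $k_{f_i}(X_i)\le_{icx}f_i(X_i)$ for each $i$. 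For $i>l$ these are immediate: $k_{f_i}=h_{f_i}$ by Proposition \ref{thm:well-def}, and $h_{f_i}(X_i)\le_{st}f_i(X_i)$ by \eqref{eq:h_f-st-f}, which implies $\le_{icx}$. The substance is the case $i\le l$, which I expect to be the main obstacle.

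For $i\le l$ I would compare the quantile functions. Writing $q_i:=VaR_{\alpha_i+\beta_i}(X_i)$, $p_i:=VaR_{\alpha_i}(X_i)$ (so $q_i\le p_i$ by Proposition \ref{prop:varpro} a)) and $d_i:=q_i-f_i(q_i)$ for the deductible of $k_{f_i}$, set $\delta(s):=f_i\big(VaR_s(X_i)\big)-k_{f_i}\big(VaR_s(X_i)\big)$, which by Proposition \ref{prop:varpro} b) is the difference of the quantile functions of $f_i(X_i)$ and $k_{f_i}(X_i)$. I will invoke the well-known reformulation of the increasing convex order, equivalent to Proposition \ref{thm:stop-loss} a), that $k_{f_i}(X_i)\le_{icx}f_i(X_i)$ holds iff $\int_0^\beta\delta(s)\,ds\ge0$ for all $\beta\in[0,1]$ (i.e.\ $ES_\beta$ of $f_i(X_i)$ dominates that of $k_{f_i}(X_i)$ at every level). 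The crux is to pin down the sign of $\delta$ on three regions, using the Lipschitz property of $f_i$ and the range $\mathcal{M}_i\in[f_i(q_i),f_i(p_i)]$: (i) $\delta\ge0$ for $s\le\alpha_i$, where $VaR_s(X_i)\ge p_i$ forces $k_{f_i}$ to saturate at $\mathcal{M}_i\le f_i(p_i)\le f_i\big(VaR_s(X_i)\big)$; (ii) $\delta\ge0$ for $s\ge\alpha_i+\beta_i$, where $VaR_s(X_i)\le q_i$ gives $f_i\ge(\cdot-d_i)_+\ge k_{f_i}$ exactly as in Proposition \ref{thm:VaR_leq}; and (iii) on the window $[\alpha_i,\alpha_i+\beta_i]$ the difference $\delta$ is first nonnegative then nonpositive (a single sign change), since there $f_i\le(\cdot-d_i)_+$ by \eqref{eq:f-k_f} while $k_{f_i}=\min\{(\cdot-d_i)_+,\mathcal{M}_i\}$.

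Finally I would assemble these. The defining property of $\mathcal{M}_i$ gives $\int_{\alpha_i}^{\alpha_i+\beta_i}\delta(s)\,ds=0$; combined with the single sign change (iii), the partial integral $\int_{\alpha_i}^{\beta}\delta\,ds$ stays nonnegative throughout the window and returns to $0$ at $\beta=\alpha_i+\beta_i$. Setting $G(\beta):=\int_0^\beta\delta\,ds$, region (i) makes $G$ nondecreasing and nonnegative on $[0,\alpha_i]$; on the window $G(\beta)\ge G(\alpha_i)\ge0$, with $G(\alpha_i+\beta_i)=\int_0^{\alpha_i}\delta\,ds\ge0$; and region (ii) keeps $G$ nondecreasing afterwards. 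Hence $G\ge0$ on $[0,1]$, which is precisely the marginal icx inequality. Together with the retained-risk equality this yields the assertion. I expect the bookkeeping of the three sign regions of $\delta$ and the verification that the $RVaR$-window contributes exactly zero to be where the real work lies; the rest transcribes the VaR argument.
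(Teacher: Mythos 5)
Your proof is correct, and its overall architecture coincides with the paper's: equality of the retained-risk terms via the defining property of $\mathcal{M}_i$ (and \eqref{eq:R_f_VaR} together with Proposition \ref{thm:well-def} for $i>l$), reduction of the premium comparison through Corollary \ref{thm:common_copula}, the PDS assumption and Theorem \ref{thm:common_copula_icx} to the marginal inequalities $k_{f_i}(X_i)\le_{icx}f_i(X_i)$, and then consistency of $\pi$ with $\le_{icx}$. The genuine difference is how you verify the marginal inequality for $i\le l$. The paper establishes the same three-region sign structure for $f_i-k_{f_i}$ — pointwise $k_{f_i}\le f_i$ below $VaR_{\alpha_i+\beta_i}(X_i)$ and above $VaR_{\alpha_i}(X_i)$, a single crossing inside the window, zero integral over the window — but then applies the cut criterion (Proposition \ref{thm:stop-loss} c)) to the variables $k_{f_i}(VaR_U(X_i))$ and $f_i(VaR_U(X_i))$ with $U\sim\mathcal{U}([\alpha_i,\alpha_i+\beta_i])$, and finishes by splitting the stop-loss transform $\Eop\big[(k_{f_i}(X_i)-d)_+\big]$ into window and complement, concluding with Proposition \ref{thm:stop-loss} a). You instead check directly that the cumulative quantile difference $G(\beta)=\int_0^\beta\delta(s)\,ds$ is nonnegative for every $\beta$, i.e.\ $ES_\beta$-dominance at all levels. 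Your route is somewhat more streamlined (no conditional-uniform detour, no stop-loss bookkeeping), and your sign analysis is sound: region (i) follows from $k_{f_i}(x)\le\mathcal{M}_i\le f_i\big(VaR_{\alpha_i}(X_i)\big)\le f_i(x)$, region (ii) is the Lipschitz argument of Proposition \ref{thm:VaR_leq}, and the single sign change plus the zero window integral give $\int_{\alpha_i}^{\beta}\delta(s)\,ds\ge0$ on the window. The price is that you lean on the equivalence between $\le_{icx}$ and dominance of integrated upper quantile functions; this is indeed a standard fact (see e.g.\ \cite{ms02}), but it is not among the characterizations listed in Proposition \ref{thm:stop-loss}, so a self-contained write-up would have to prove or cite it, whereas the paper's argument uses only tools it has already stated. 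One small remark: your assignment of (A2) to icx-consistency of $\pi$ and (A3) to the PDS copula is the correct reading of the assumptions (the paper's own proof text inadvertently swaps the two labels).
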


\begin{proof}
First, let $i \in \{1, \dots, l\}$. The retained loss functions $R_{f_i}(x) = x - f_i(x)$ and $R_{k_{f_i}}(x)= x- k_{f_i}(x)$ are increasing continuous as $f_i, k_{f_i} \in \mathcal{C}$. Hence, by the properties of $VaR_{\alpha_i}$ and the definition of $k_{f_i}$
\begin{align}
	RVaR_{\alpha_i, \beta_i} \big( R_{f_i}(X_i) \big) &= \frac{1}{\beta_i} \int_{\alpha_i}^{\alpha_i + \beta_i} VaR_{s}(X_i ) - f_i(VaR_s(X_i)) ds \notag\\
	& = RVaR_{\alpha_i, \beta_i}(X_i) - RVaR_{\alpha_i, \beta_i}\big( f_i(X_i) \big) \notag\\
	&= RVaR_{\alpha_i, \beta_i}(X_i) - RVaR_{\alpha_i, \beta_i}\big( k_{f_i}(X_i) \big) \notag\\
	& = \frac{1}{\beta_i} \int_{\alpha_i}^{\alpha_i + \beta_i} VaR_{s}(X_i ) - k_{f_i}(VaR_s(X_i)) ds \notag\\
	&= RVaR_{\alpha_i, \beta_i} \big( R_{k_{f_i}}(X_i) \big). \label{eq:RVaR_equal}	
\end{align}
We already know from (\ref{eq:f-k_f}) that for $x \geq VaR_{\alpha_i + \beta_i}(X_i)$
\[  f_i(x) \leq \Big(x - \big(VaR_{\alpha_i + \beta_i}(X_i) - f_i(VaR_{\alpha_i + \beta_i}(X_i)) \big)\Big)_+=:\hat{k}_{f_i}(x)
.\]
Since $k_{f_i}(x)=\min\{\hat{k}_{f_i}(x), \mathcal{M}_i \}$ and
\begin{align}\label{eq:integrals_f_k_f_equal}
	\int_{\alpha_i}^{\alpha_i + \beta_i} f_i(VaR_s(X_i)) ds = \int_{\alpha_i}^{\alpha_i + \beta_i} k_{f_i}(VaR_s(X_i)) ds,
\end{align}
which holds by the choice of $\mathcal{M}_i$, there is an $x_0 \in [VaR_{\alpha_i+ \beta_i}(X_i), VaR_{\alpha_i}(X_i)]$ such that
\begin{align}\label{eq:cut}
	\begin{aligned} 
	f_i(x) &\leq k_{f_i}(x) && \text{for } x \in [VaR_{\alpha_i + \beta_i}(X_i), x_0]\\
	f_i(x) &\geq k_{f_i}(x) && \text{for } x \in (x_0, VaR_{\alpha_i }(X_i)].
	\end{aligned}
\end{align}
Now let $U \sim \mathcal{U}([\alpha_i, \alpha_i+ \beta_i])$. Then by (\ref{eq:integrals_f_k_f_equal})
\begin{align}\label{eq:eqfkf}
	\Eop[f_i(VaR_U(X_i))] = \Eop[k_{f_i}(VaR_U(X_i))].
\end{align}
This together with (\ref{eq:cut}) fulfils the assumptions of Proposition \ref{thm:stop-loss} c) and we therefore have 
\begin{align}\label{eq:VaR_icx}
	k_{f_i}(VaR_U(X_i)) \leq_{icx} f_i(VaR_U(X_i)).
\end{align}
Analogous to (\ref{eq:f-k_f})  one obtains
\begin{align} \label{eq:k_f_leq_f}
	k_{f_i}(x) \leq f_i(x) & \quad \text{for } 0 \leq  x \leq VaR_{\alpha_i + \beta_i}(X_i) \phantom{.}
\end{align}
and as $f_i$ is increasing it follows from (\ref{eq:cut})
\begin{align} \label{eq:k_f_leq_f2}
k_{f_i}(x) \leq f_i(x) & \quad \text{for } x \geq VaR_{\alpha_i }(X_i).\phantom{0 \leq_{+ \beta_i}}
\end{align}
Next, let $V \sim \mathcal{U}([0,1])$. Then $X_i \sim F^{-1}_{X_i}(1-V)=VaR_V(X_i)$ and consequently
\begin{eqnarray*}
	&&\Eop\Big[(k_{f_i}(X_i) - d)_+\Big] = \Eop\Big[(k_{f_i}(VaR_V(X_i)) - d)_+\Big]\\
	&&= \int_{[0,\alpha_i)\cup (\alpha_i + \beta_i,1]} (k_{f_i}(VaR_v(X_i)) - d)_+ dv + \int_{\alpha_i}^{\alpha_i + \beta_i}   (k_{f_i}(VaR_v(X_i)) - d)_+ dv \\
	&&= \int_{[0,\alpha_i)\cup (\alpha_i + \beta_i,1]} (k_{f_i}(VaR_v(X_i)) - d)_+ dv + \beta_i \Eop\Big[(k_{f_i}(VaR_U(X_i)) - d)_+ \Big]\\
	&& \leq \int_{[0,\alpha_i)\cup (\alpha_i + \beta_i,1]} (f_i(VaR_v(X_i)) - d)_+ dv + \beta_i \Eop\Big[(f_i(VaR_U(X_i)) - d)_+ \Big] \\
	&&= \Eop\Big[(f_i(VaR_V(X_i)) - d)_+\Big]= \Eop\Big[(f_i(X_i) - d)_+\Big],
\end{eqnarray*}
where the inequality is due to (\ref{eq:VaR_icx}), (\ref{eq:k_f_leq_f})  and (\ref{eq:k_f_leq_f2}). Hence we have $k_{f_i}(X_i) \leq_{icx} f_i(X_i)$. 

We now consider the remaining indices. Let $i \in \{l+1, \dots, n\}$. From the proof of Proposition \ref{thm:VaR_leq} together with Proposition \ref{thm:well-def} it follows that we have $k_{f_i}(X_i) = h_{f_i}(X_i) \leq_{icx} f_i(X_i)$ for these indices, too.

For all $i=1, \dots, n$ the functions $k_{f_i}$ and $f_i$ are in $\mathcal{C}$, i.e.\ increasing and continuous. Thus by (A2),  the random vectors
\[k_f(X):=\big(k_{f_1}(X_1), \dots, k_{f_n}(X_n)\big) \quad \text{and} \quad f(X):=\big(f_1(X_1), \dots, f_n(X_n)\big)\]
have the same PDS  copula as $X=(X_1, \dots, X_n)$. Hence, by Theorem \ref{thm:common_copula_icx}
\begin{align} \label{eq:sum_icx}
	\sum_{i=1}^{n} k_{f_i}(X_i) \leq_{icx} \sum_{i=1}^{n} f_i(X_i).
\end{align}
Finally, (\ref{eq:RVaR_equal}) and the fact that $\pi$ is consistent with the increasing convex order by (A3) together with (\ref{eq:sum_icx}) yields the assertion.
\end{proof}

The main result of this section is given in the next theorem. The proof follows with Proposition \ref{thm:RVaR_leq} in the same way as the proof of Theorem \ref{thm:VaR_leq}.

\begin{theorem} \label{thm:reduction}
	Under assumptions (A2), (A3), the optimal reinsurance problem (\ref{eq:ORP}) has the same optimal value as the finite dimensional problem
	\begin{align}
	&\min \quad  \sum_{i=1}^n a_i + \sum_{i=1}^l RVaR_{\alpha_i , \beta_i}\big((X_i-a_i-b_i)_+\big)  + \pi \left( \sum_{i=1}^{n} \min\{(X_i-a_i)_+, b_i\}\right) \notag \\
	& s.t.  \quad 0 \leq a_i \leq VaR_{\alpha_i + \beta_i}(X_i), \quad i=1, \dots, n \notag\\
	&\phantom{\text{s.t.}} \quad a_i+b_i \geq VaR_{\alpha_i + \beta_i}(X_i), \quad i=1, \dots, l \label{eq:FORP_2}\\
	&\phantom{\text{s.t.}} \quad b_i = VaR_{\alpha_i}(X_i) - a_i , \quad i=l+1, \dots, n. \notag
	\end{align}
	Furthermore, if $a_1, \dots, a_n, b_1, \dots, b_n$ is an optimal solution to (\ref{eq:FORP_2}) then
	\begin{align*}
		f_i(x) &= \min\{(x-a_i)_+, b_i\}, \quad i=1, \dots,n
	\end{align*}
	is an optimal solution to (\ref{eq:ORP}).
\end{theorem}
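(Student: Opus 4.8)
The plan is to prove the claimed equality of optimal values by the same two-sided sandwich argument used for Theorem~\ref{thm:reduction_VaR}, with Proposition~\ref{thm:RVaR_leq} now supplying the crucial reduction step. Write $V$ for the optimal value of the infinite-dimensional problem \eqref{eq:ORP} and $W$ for the optimal value of the finite-dimensional problem \eqref{eq:FORP_2}. I would establish $V \le W$ and $W \le V$ separately and then read off the optimal solution.

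For the bound $V \le W$, I would fix any feasible point $(a_1,\dots,a_n,b_1,\dots,b_n)$ of \eqref{eq:FORP_2} and insert the associated layer treaties $f_i(x)=\min\{(x-a_i)_+,b_i\}$, which lie in $\mathcal{C}$ by Definition~\ref{def:layer}, into the objective of \eqref{eq:ORP}, showing that the resulting value is exactly the objective of \eqref{eq:FORP_2}. The key computation is that for $i\le l$ the retained loss is $R_{f_i}(x)=\min\{x,a_i\}+(x-a_i-b_i)_+$, and since the constraint $a_i\le VaR_{\alpha_i+\beta_i}(X_i)$ forces $VaR_s(X_i)\ge a_i$, hence $\min\{VaR_s(X_i),a_i\}=a_i$, for every $s\in[\alpha_i,\alpha_i+\beta_i]$, Proposition~\ref{prop:varpro}~b) yields $RVaR_{\alpha_i,\beta_i}\big(R_{f_i}(X_i)\big)=a_i+RVaR_{\alpha_i,\beta_i}\big((X_i-a_i-b_i)_+\big)$; for $i>l$ the constraint $b_i=VaR_{\alpha_i}(X_i)-a_i$ collapses the corresponding term to $a_i$, and the premium term matches verbatim. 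As these layers form a subset of $\mathcal{C}^n$, minimizing the \eqref{eq:ORP} objective over all of $\mathcal{C}^n$ can only decrease the value, so $V\le W$.

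For the reverse bound $W\le V$, I would take arbitrary $f_1,\dots,f_n\in\mathcal{C}$ and invoke Proposition~\ref{thm:RVaR_leq} to replace them by the layer treaties $k_{f_i}$ of \eqref{eq:k_f_1} without increasing the \eqref{eq:ORP} objective. It then remains to observe that each $k_{f_i}$ is a layer with deductible $a_i=VaR_{\alpha_i+\beta_i}(X_i)-f_i(VaR_{\alpha_i+\beta_i}(X_i))$ and cap $b_i=\mathcal{M}_i$, and that these parameters are feasible for \eqref{eq:FORP_2}: the bounds $0\le a_i\le VaR_{\alpha_i+\beta_i}(X_i)$ follow from $0\le f_i(x)\le x$, the inequality $a_i+b_i\ge VaR_{\alpha_i+\beta_i}(X_i)$ for $i\le l$ follows from $\mathcal{M}_i\ge f_i(VaR_{\alpha_i+\beta_i}(X_i))$, and the equality $a_i+b_i=VaR_{\alpha_i}(X_i)$ for $i>l$ follows from $k_{f_i}=h_{f_i}$ via Proposition~\ref{thm:well-def}. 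By the reduction computed in the previous step, the \eqref{eq:ORP} objective evaluated at $k_f$ equals the \eqref{eq:FORP_2} objective at this feasible $(a,b)$, which is at least $W$; chaining with Proposition~\ref{thm:RVaR_leq} gives $W\le$ (value at $f$), and taking the infimum over $f$ yields $W\le V$. Combining the two bounds gives $V=W$, and since an optimal $(a,b)$ for \eqref{eq:FORP_2} produces layer treaties attaining $V$, the displayed $f_i$ are optimal for \eqref{eq:ORP}.

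The routine but most delicate point is the objective reduction of the second paragraph: one must use the deductible constraint $a_i\le VaR_{\alpha_i+\beta_i}(X_i)$ to discard the $\min\{VaR_s(X_i),a_i\}=a_i$ term inside the $RVaR$ integral, and this is the only place where the genuine Range-Value-at-Risk structure, as opposed to the pure Value-at-Risk computation of Theorem~\ref{thm:reduction_VaR}, enters the argument. Everything else is the bookkeeping of matching $k_{f_i}$ to admissible parameters of \eqref{eq:FORP_2}, which is why the theorem follows from Proposition~\ref{thm:RVaR_leq} in exactly the same way that Theorem~\ref{thm:reduction_VaR} follows from Proposition~\ref{thm:VaR_leq}.
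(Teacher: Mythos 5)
Your proposal is correct and matches the paper's intended argument exactly: the paper proves Theorem~\ref{thm:reduction} by remarking that it follows from Proposition~\ref{thm:RVaR_leq} via the same two-sided sandwich used for Theorem~\ref{thm:reduction_VaR}, which is precisely your structure. Your write-up in fact supplies the details the paper leaves implicit, in particular the identity $RVaR_{\alpha_i,\beta_i}\big(R_{f_i}(X_i)\big)=a_i+RVaR_{\alpha_i,\beta_i}\big((X_i-a_i-b_i)_+\big)$ under the constraint $a_i\le VaR_{\alpha_i+\beta_i}(X_i)$, and the verification that the parameters of $k_{f_i}$ are feasible for \eqref{eq:FORP_2}.
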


\begin{remark}
Note that in the setting of Theorem \ref{thm:reduction} an optimal solution exists, when $\pi$ is Lipschitz continuous and for all $i=1,\dots, n$ it holds $\alpha_i>0$ or $X_i$ bounded. We have $a_i\in [0,VaR_{\alpha_i+\beta_i}(X_i)]$ and $b_i \in [0,VaR_{\alpha_i}(X_i)]$. Note that $VaR_{\alpha_i+\beta_i}(X_i) < \infty$ since we required $0<\alpha_i+\beta_i$ from the beginning and $VaR_{\alpha_i}(X_i)< \infty$ holds if and only if $\alpha_i>0$ or $X_i$ bounded. Thus the target function is minimized over a compact set. Moreover, $\pi$ is Lipschitz-continuous w.r.t.\ the supremum norm and applied to bounded random variables. Finally for monotone, translation invariant risk measures $\rho$ we obtain since $(x-c)_+ \le x_++|c|$ that $\rho((X-d)_+)-\rho((X-c)_+) \le \rho((X-c)_+ +|c-d|)_+-\rho((X-c)_+) = |c-d|$. Hence $RVaR_{\alpha_i,\beta_i}\big((X_i-a_i-b_i)_+\big) $ is Lipschitz-continuous in $a_i,b_i$ which implies the result.
\end{remark}

\section{When is the Social Optimum also Optimal for the Individual?}
When we choose $n=1$ then the optimization problem \eqref{eq:ORP} reduces to
\begin{equation}\label{eq:ORPS}
\min\quad \rho \big( R_{f}(X) \big) + \pi \big( f(X) \big) \quad  s.t.\quad  f \in \mathcal{C}.
\end{equation}
This problem can be interpreted as one where the individual insurance company is seeking for optimal reinsurance: $R_{f}(X) $ is the retained risk which is evaluated with the risk measure $\rho$ and $\pi \big( f(X) \big) $ is the premium charged by the reinsurer. In this section we will discuss when the  social optimum obtained by solving \eqref{eq:ORP} coincides with the individually optimal reinsurance treaties obtained by solving \eqref{eq:ORPS} for each single company $i=1,\ldots,n$. 

\subsection{$\pi$ is the expected value premium principle} 
An obvious case where both optimal reinsurance treaties coincide occurs when $\pi$ is the expected value premium principle i.e.\ $\pi(X) = (1+\theta) \Eop[X]$ with $\theta\ge 0$. Here we obtain $$ \pi \left( \sum_{i=1}^{n} f_i(X_i) \right)  = \sum_{i=1}^n \pi\big(f_i(X_i)\big)$$
which implies that the global optimization problem separates into local ones. Note that $\pi$ is not translation-invariant. 

\subsection{Comonotonicity together with a comonotone additive $\pi$}
Another simple case arises when the risks $X_1,\ldots, X_n$ have the upper Fr\'echet  copula and $\pi$ is comonotone additive.  Wang premium principles are for example comonotone additive (see e.g. \cite{WD98}). For a risk $X\in L^1$ they are defined as 
$$ \pi(X)= (1+\theta) \int_0^\infty g(S_X(x))  dx  $$
where $g:[0,1] \to [0,1]$ increasing with $g(0)=0$ and $g(1)=1$ is the distortion function and $\theta\ge 0$. In this setting we obtain again that $$ \pi \left( \sum_{i=1}^{n} f_i(X_i) \right)  = \sum_{i=1}^n \pi\big(f_i(X_i)\big)$$ since $f_i$ are increasing and thus preserve the copula.

\subsection{Independent risks together with exponential premium principle $\pi$}
Suppose the risks $X_1,\ldots, X_n$ are independent and $\pi$ is the exponential premium principle, i.e.
$$ \pi(X) = \frac{1}{\gamma} \log\Big( \Eop[e^{\gamma X}]\Big)$$
where $\gamma>0$ is the risk sensitivity parameter. In this setting we obtain again 
$$ \pi \left( \sum_{i=1}^{n} f_i(X_i) \right)  = \sum_{i=1}^n \pi\big(f_i(X_i)\big)$$
which implies that the global optimization problem separates into local ones. For other applications $\pi$ is also referred to as entropic risk measure.

\subsection{Individual risk measures are Value-at-Risk and $\pi$ is Lipschitz with constant $1$}
When all insurers measure their retained risk with Value-at-Risk, we know from Theorem \ref{thm:reduction_VaR} that it suffices to find an optimal solution $(a_1, \dots, a_n)$ of the finite dimensional optimization problem 
\begin{align}
	\begin{aligned}
		{\min} \quad  \sum_{i=1}^n a_i + \pi \left( \sum_{i=1}^{n} \min\{(X_i-a_i)_+, VaR_{\alpha_i}(X_i) - a_i\}\right) \\
		\text{s.t.} \quad  0 \leq a_i \leq VaR_{\alpha_i}(X_i), \quad i=1, \dots, n.
	\end{aligned}
\end{align}
in order to obtain optimal insurance treaties, which are then given by
\[f_i(x) = \min\{(x-a_i)_+, VaR_{\alpha_i}(X_i) - a_i\}, \quad i=1, \dots, n.\]  
The following proposition shows that a general solution can be obtained independently from the specific copula and marginal distributions of $X=(X_1,\dots, X_n)$ as well as the specific premium principle used by the reinsurance company.
\begin{proposition}\label{thm:VaR-solution}
	If all insurers use Value-at-Risk and $\pi$ satisfies (A1) and  is Lipschitz   with constant $1$, an optimal solution to (\ref{eq:FORP_1}) is given by
	\[ (a_1, \dots, a_n) = (0, \dots, 0) \]
	and hence the reinsurance treaties
	\[f_i(x) = \min\{x, VaR_{\alpha_i}(X_i) \}, \quad i=1, \dots, n\]
	are an optimal solution to (\ref{eq:ORP}).
\end{proposition}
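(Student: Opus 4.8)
The plan is to show directly that the candidate $(0,\dots,0)$ minimises the objective of (\ref{eq:FORP_1}). Writing $V_i := VaR_{\alpha_i}(X_i)$ and
\[
G(a_1,\dots,a_n) := \sum_{i=1}^n a_i + \pi\Big(\sum_{i=1}^n \min\{(X_i-a_i)_+,\, V_i-a_i\}\Big),
\]
and setting, for a feasible $a$, the aggregate ceded risks $Y^a := \sum_{i=1}^n \min\{(X_i-a_i)_+,\, V_i-a_i\}$ and $Y^0 := \sum_{i=1}^n \min\{X_i,\, V_i\}$, the inequality $G(0) \le G(a)$ is equivalent to the single estimate $\pi(Y^0) - \pi(Y^a) \le \sum_{i=1}^n a_i$. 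So the whole proof reduces to bounding how much the premium can grow when every deductible is lowered to $0$.

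The heart of the argument is a \emph{pointwise} comparison of the ceded losses. First I would establish, for each $i$ and each realisation $x \ge 0$ of $X_i$, the two-sided bound
\[
0 \le \min\{x,\, V_i\} - \min\{(x-a_i)_+,\, V_i-a_i\} \le a_i .
\]
This follows from a short case distinction on the position of $x$ relative to $a_i$ and $V_i$: for $x \le a_i$ the difference equals $x \in [0,a_i]$, while for $a_i < x \le V_i$ and for $x > V_i$ it equals exactly $a_i$. Summing over $i$ then yields the pointwise estimate $0 \le Y^0 - Y^a \le \sum_{i=1}^n a_i$, and in particular $\|Y^0 - Y^a\|_\infty \le \sum_{i=1}^n a_i$.

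It then remains to invoke the hypothesis that $\pi$ is Lipschitz with constant $1$ with respect to the supremum norm, which gives
\[
\pi(Y^0) - \pi(Y^a) \le \big|\pi(Y^0)-\pi(Y^a)\big| \le \|Y^0-Y^a\|_\infty \le \sum_{i=1}^n a_i .
\]
Rearranging yields $G(0) \le G(a)$ for every feasible $a$, so $(0,\dots,0)$ is optimal for (\ref{eq:FORP_1}). Since (A1) holds, Theorem \ref{thm:reduction_VaR} applies and translates this into optimality of the treaties $f_i(x)=\min\{x,\, V_i\}$ for (\ref{eq:ORP}).

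I expect the only delicate point to be the pointwise bound of the second step; conceptually it expresses that lowering each deductible to $0$ enlarges the ceded loss by at most $a_i$ pointwise, so the premium can rise by at most $\sum_i a_i$, exactly cancelling the savings in the $\sum_i a_i$ term. The precise matching of the Lipschitz constant $1$ to this per-insurer bound is what forces $a=0$ to be optimal, and it is worth noting that a larger Lipschitz constant would break the conclusion.
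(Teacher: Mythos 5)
Your proof is correct and follows essentially the same route as the paper's: the identical pointwise case-distinction bound on the difference of the layer functions, matched exactly against the $1$-Lipschitz property of $\pi$ in the supremum norm, and then Theorem \ref{thm:reduction_VaR} to pass back to (\ref{eq:ORP}). The only cosmetic difference is that you compare an arbitrary feasible $a$ directly with $0$ and use $\pi(Y^0)-\pi(Y^a)\le\lvert\pi(Y^0)-\pi(Y^a)\rvert$, whereas the paper proves the slightly stronger statement that the objective is componentwise increasing, invoking (A1) to turn the corresponding estimate into an equivalence.
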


\begin{proof}
Let us denote the objective function by
\[Q(a_1, \dots, a_n) =  \sum_{i=1}^n a_i + \pi \Big( \sum_{i=1}^{n} \min\{(X_i-a_i)_+, VaR_{\alpha_i}(X_i) - a_i\}\Big). \] 
We show that $Q$ is increasing which implies the assertion. First, note that the function $\min\{(x-a_i)_+, VaR_{\alpha_i}(X_i) - a_i\}$ is decreasing in $a_i$ for all $x \in [0,\infty)$, $i=1, \dots, n$. Since the premium principle $\pi$ satisfies (A1),
	\[	q(a_1, \dots, a_n) := \pi \Big( \sum_{i=1}^{n} \min\{(X_i-a_i)_+, VaR_{\alpha_i}(X_i) - a_i\}\Big)\]
	is a decreasing function. Now let $0 \leq a=(a_1, \dots,a_n) \leq (b_1, \dots, b_n) =b$ componentwise. 
By $\| \cdot\|$ we denote the supremum norm and by $\| \cdot\|_1$ the $L^1$-norm.		We have
	\begin{align}
		& Q(a_1, \dots, a_n) \leq Q(b_1, \dots,b_n) \notag \\
		\Leftrightarrow \quad & q(a_1, \dots, a_n) - q(b_1, \dots,b_n) \leq \sum_{i=1}^n b_i-a_i \notag\\ 
		\Leftrightarrow \quad & |q(a_1, \dots, a_n) - q(b_1, \dots,b_n)| \leq \|b-a\|_1,\label{eq:Q_increasing}
	\end{align}	 
	where the last equivalence holds as $q$ is decreasing and $a \leq b$. Hence, it suffices to show (\ref{eq:Q_increasing}). Let $i \in \{1, \dots, n\}$ be arbitrary. For $x \geq VaR_{\alpha_i} (X_i)$ it holds
	\begin{align}\label{eq:a-b-1}
	\begin{aligned}
	0 \leq & \min\{(x-a_i)_+, VaR_{\alpha_i}(X_i) - a_i\} - \min\{(x-b_i)_+, VaR_{\alpha_i}(X_i) - b_i\}\\
	= &  VaR_{\alpha_i}(X_i) - a_i  - VaR_{\alpha_i}(X_i) + b_i =b_i-a_i
	\end{aligned}	
	\end{align}
	and for $x < VaR_{\alpha_i}(X_i)$
	\begin{align}\label{eq:a-b-2}
	\begin{aligned}
	0 \leq &\min\{(x-a_i)_+, VaR_{\alpha_i}(X_i) - a_i\} - \min\{(x-b_i)_+, VaR_{\alpha_i}(X_i) - b_i\}\\
	=& \begin{cases}
	x-a_i-x+b_i =b_i-a_i, & b_i \leq x\\
	x-a_i-0 \leq b_i-a_i, & a_i \leq x < b_i\\
	0-0 \leq b_i-a_i, & x < a_i.
	\end{cases}
	\end{aligned}	
	\end{align}
	It follows
	\begin{align*}
		& |q(a_1, \dots, a_n) - q(b_1, \dots,b_n)|\\
		\leq \ & \left\lVert \sum_{i=1}^n  \min\{(X_i-a_i)_+, VaR_{\alpha_i}(X_i) - a_i\} - \min\{(X_i-b_i)_+, VaR_{\alpha_i}(X_i) - b_i\}\right\rVert\\
		\leq \ &  \sum_{i=1}^n \left\lVert \min\{(X_i-a_i)_+, VaR_{\alpha_i}(X_i) - a_i\} - \min\{(X_i-b_i)_+, VaR_{\alpha_i}(X_i) - b_i\}\right\rVert\\
		\leq \ & \sum_{i=1}^n  b_i - a_i = \|b-a\|_1,
	\end{align*}
	which is exactly (\ref{eq:Q_increasing}). The first inequality is due to the Lipschitz continuity of $\pi$, the second one by the triangular inequality and last one by (\ref{eq:a-b-1}), (\ref{eq:a-b-2}). 
\end{proof}
Note that every monotone and translation invariant premium principle is Lipschitz with constant $1$. Theorem \ref{thm:VaR-solution} implies that  the social optimum coincides with the individual optimum.

\subsection{A Case where Things are Different}
After so many examples showing that the social optimum coincides with the individual optimum one might wonder how examples look like where this is not the case. Since we want to discuss such a case analytically and not only numerically we make some assumptions to simplify the model.

\begin{proposition}\label{prop:sym}
Suppose that $\rho_1=\ldots =\rho_n=\rho$, $\rho$ is positive homogeneous and comonotone additive, $\pi$ is positive homogeneous and subadditive and $X_1,\ldots, X_n$ are identically distributed and have a symmetric copula. Then whenever optimal reinsurance treaties $f_1^*,\ldots ,f_n^*\in \mathcal{C}$ exist, there is a symmetric solution $ f_1^*=\ldots=f_n^*=f^*$.
\end{proposition}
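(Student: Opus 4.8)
The plan is to symmetrise a given optimal tuple $(f_1^*,\dots,f_n^*)$ by averaging it over all permutations of the index set, which lands on the common function $\bar f:=\frac1n\sum_{i=1}^n f_i^*$ and the symmetric candidate $(\bar f,\dots,\bar f)$. Since $\mathcal C$ is convex (the defining conditions $f\le x$ and monotonicity of $f$ and of $R_f$ are all preserved under convex combinations), $\bar f\in\mathcal C$, so this candidate is feasible. It then suffices to show that its objective value in \eqref{eq:ORP} does not exceed the optimal value; equality, and hence symmetry of an optimal solution, follows at once. I would treat the retained-risk part and the premium part of the objective separately.

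For the retained-risk part I claim the value is in fact \emph{unchanged}. Using that the $X_i$ are identically distributed and that $\rho$ is law-invariant, $\sum_{i=1}^n\rho(R_{\bar f}(X_i))=n\,\rho(R_{\bar f}(X_1))$ and likewise $\sum_{i=1}^n\rho(R_{f_i^*}(X_i))=\sum_{i=1}^n\rho(R_{f_i^*}(X_1))$. Now $R_{\bar f}(X_1)=\frac1n\sum_{i=1}^n R_{f_i^*}(X_1)$, and the summands $R_{f_i^*}(X_1)$ are increasing functions of the single variable $X_1$, hence comonotonic; scaling by $1/n$ preserves comonotonicity. Comonotone additivity together with positive homogeneity of $\rho$ then yields $\rho(R_{\bar f}(X_1))=\frac1n\sum_{i=1}^n\rho(R_{f_i^*}(X_1))$, so the two retained-risk parts coincide exactly. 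This step is the crux and the main obstacle: since $\rho$ is only assumed positively homogeneous and comonotone additive (not convex), a naive convexity/Jensen bound for this term is unavailable — it is precisely the comonotonicity of the retained losses, all being increasing functions of the common $X_1$, that rescues the argument and even gives equality rather than an inequality.

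For the premium part I would use a genuine convexity argument. First, a symmetric copula together with identical marginals makes $(X_1,\dots,X_n)$ exchangeable, so for every permutation $\sigma$ the variable $Z_\sigma:=\sum_{i=1}^n f_{\sigma(i)}^*(X_i)=\sum_{j=1}^n f_j^*(X_{\sigma^{-1}(j)})$ has the same distribution as $\sum_{i=1}^n f_i^*(X_i)$; by law-invariance of $\pi$ this gives $\pi(Z_\sigma)=\pi(\sum_i f_i^*(X_i))$. Averaging over all $\sigma$ yields $\frac1{n!}\sum_\sigma Z_\sigma=\sum_{i=1}^n\bar f(X_i)$, because for fixed $i$ the value $\sigma(i)$ runs through each index equally often (namely $(n-1)!$ times), so the permutation average of $f_{\sigma(i)}^*$ collapses to $\bar f$. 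Since $\pi$ is positively homogeneous and subadditive it is sublinear, hence convex, and Jensen's inequality gives $\pi(\sum_i\bar f(X_i))=\pi(\frac1{n!}\sum_\sigma Z_\sigma)\le\frac1{n!}\sum_\sigma\pi(Z_\sigma)=\pi(\sum_i f_i^*(X_i))$. Combining the two parts shows that the objective at $(\bar f,\dots,\bar f)$ is at most the optimal value, so $(\bar f,\dots,\bar f)$ is an optimal symmetric solution. Beyond the comonotonicity point above, the only care required is verifying exchangeability from the symmetric-copula hypothesis and the elementary bookkeeping that the permutation average reduces to $\bar f$.
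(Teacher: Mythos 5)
Your proposal is correct and is essentially the paper's own argument: both proofs symmetrize an optimal tuple by averaging over all permutations to arrive at $f^*=\frac1n\sum_i f_i^*$, use exchangeability (symmetric copula plus identical marginals) together with law-invariance for the permutation-invariance of the objective, exploit comonotone additivity and positive homogeneity of $\rho$ to handle the retained-risk terms (where, as you note, one gets exact equality since all $R_{f_i^*}(X_1)$ are increasing functions of the same variable), and use sublinearity of $\pi$ for the premium term. The only difference is organizational --- the paper runs one chain of (in)equalities on the whole objective $Q$ with a $\frac{1}{n!}$-scaling trick, while you split the objective into the two parts and treat them separately --- and your explicit verification that $\mathcal{C}$ is convex (so that $f^*$ is feasible) is a detail the paper leaves implicit.
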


\begin{proof}
Let us denote the objective function for $ f_1,\ldots ,f_n \in \mathcal{C}$ by 
$$ Q(f_1,\ldots,f_n) = \sum_{i=1}^n \rho_i \big( R_{f_i}(X_i) \big) + \pi \left( \sum_{i=1}^{n} f_i(X_i) \right)$$
and suppose that $(f_1^*,\ldots ,f_n^*)$  is an optimal solution.
We will show that then $(f^*,\ldots ,f^*)$ with  $f^*(x) := \frac{1}{n} \sum_{i=1}^n f^*_i(x)$ is also optimal. 

First it is not difficult to see that due to our assumption $Q$ is symmetric, i.e. $$Q(f_1,\ldots,f_n) = Q(f_{\sigma(1)},\ldots,f_{\sigma(n)})$$
where $\sigma$ is any permutation of the numbers $1,\ldots, n$. Thus we obtain with the positive homogeneity and subadditivity of $\pi$ and with the positive homogeneity and comonotone additivity of $\rho$ that 
\begin{eqnarray*}
Q(f_1^*,\ldots,f_n^*) &=& \frac{1}{n!} \sum_\sigma Q(f_{\sigma(1)}^*,\ldots,f_{\sigma(n)}^*)\\
&=& \frac{1}{n!} \sum_\sigma \Big(\rho(X_1-f^*_{\sigma(1)}(X_1))+\ldots+\rho(X_n-f^*_{\sigma(n)}(X_n))\\
&& + \pi(f_{\sigma(1)}^*(X_1)+\ldots +f_{\sigma(n)}^*(X_n))\Big)\\
&=&  \sum_\sigma \Big(\rho(\frac{1}{n!} X_1-\frac{1}{n!} f^*_{\sigma(1)}(X_1))+\ldots+\rho(\frac{1}{n!} X_n-\frac{1}{n!} f^*_{\sigma(n)}(X_n))\\
&& + \pi(\frac{1}{n!} f_{\sigma(1)}^*(X_1)+\ldots +\frac{1}{n!} f_{\sigma(n)}^*(X_n))\Big)\\
&\ge & \rho\big( X_1-\frac{1}{n!} \sum_\sigma f^*_{\sigma(1)}(X_1)\big)+\ldots+\rho\big(X_n-\frac{1}{n!}\sum_\sigma f^*_{\sigma(n)}(X_n)\big)\\
&& + \pi\big(\frac{1}{n!} \sum_\sigma f_{\sigma(1)}^*(X_1)+\ldots +\frac{1}{n!}\sum_\sigma f_{\sigma(n)}^*(X_n)\big)\\
&=& \rho\big( X_1- f^*(X_1)\big)+\ldots+\rho\big(X_n- f^*(X_n)\big)+ \pi\big(f^*(X_1)+\ldots + f^*(X_n)\big)\\
&=& Q(f^*,\ldots, f^*).
\end{eqnarray*}
Thus, using $f^*$ as reinsurance treaty for all insurance companies is also optimal.
\end{proof}

In order to demonstrate how the social and individual optimum can differ, we consider identically distributed, binary risks occuring for instance in term life insurance. Due to positive homogeneity of $\rho$ and $\pi$ we may assume Bernoulli distributions for all insurers. A standard way to describe the dependence structure of such risks is a Bernoulli mixture model: We assume that the individual risks can be decomposed into a common economic (or systemic) risk factor $Z \in (0,1)$ and independent idiosyncratic components, i.e. let
\[X_1, \dots, X_n \mid Z \overset{iid}{\sim} Bin(1,Z). \]
As risk measures we take $\rho_i = VaR_{\alpha}$ for $i=1, \dots, n$ and an arbitrary premium principle from the large class of Wang premium principles 
\[\pi(X)= (1+ \theta) \int_0^\infty g(S_X(x))dx, \]
with concave distortion function $g$. It holds
\[ VaR_{\alpha}(X_i) = \begin{cases}
1, &\alpha < \Eop[Z]\\
0, & \alpha \geq \Eop[Z]
\end{cases}\]
for $i=1,\dots, n$. Excluding trivial cases we assume $\alpha<\Eop[Z]$. Our example fulfills the assumptions of Proposition \ref{prop:sym}. Hence, it suffices to consider symmetric solutions to the optimal reinsurance problem $(\ref{eq:ORP})$. By Theorem \ref{thm:reduction_VaR} we have to solve
\[ \min \quad na + \pi\left( \sum_{i=1}^{n} \min\{(X_i-a)_+, 1-a\} \right) \quad \text{such that } 0 \leq a \leq 1. \]
Set $N= \sum_{i=1}^{n} X_i$. It is easy to see that $\Prob(N=k) = \binom{n}{k} \int_{0}^{1} z^k (1-z)^{n-k} dF_Z(z) =:p_{n,k} $. 
Calculating the premium principle explicitly, one obtains
\begin{align*}
	\pi\left( \sum_{i=1}^{n} \min\{(X_i-a)_+, 1-a\} \right) &= (1-a) \pi\left( \sum_{i=1}^{n} X_i \right)\\
    &= (1-a) (1+\theta) \int_0^{\infty} g(S_N(x)) dx\\
    &= (1-a) (1+ \theta) \sum_{k=1}^{n} g \left( \sum_{j=k}^{n} p_{n,j} \right)
\end{align*}
Thus, it is socially optimal that all risks are fully ceded if
\[1 > (1+\theta) \frac1n \sum_{k=1}^n g \left( \sum_{j=k}^{n} p_{n,j} \right) \]
and no reinsurance is purchased otherwise. Considering the individual optimum ($n=1$), one obtains the respective ceding condition
\[1 > (1+\theta) g \left( p_{1,1} \right) = (1+\theta) g \left( \Eop[Z] \right). \]
Since $g$ is concave we have 
\begin{align}
	\frac1n \sum_{k=1}^n g \left( \sum_{j=k}^{n} p_{n,j} \right) &\leq g \left(\frac1n \sum_{k=1}^n \sum_{j=k}^{n} p_{n,j
	} \right) \label{eq:cede}\\
	&= g \left(\frac1n \int_0^{\infty} S_N(x) dx \right) \notag\\
	&= g\left(\frac1n \Eop[N]\right) \notag \\
	&= g\left(\Eop[Z]\right). \notag
\end{align}
Whenever (\ref{eq:cede}) is strict it may happen that an individually rational insurer retains the full risk even though it would be socially optimal to cede.

\bibliographystyle{abbrv}

\end{document}